\newtheorem{theorem}{Theorem}
\begin{document}
\title{\textbf{Two-Phase Modeling of Fluid Injection Inside Subcutaneous Layer of Skin}}
\author[1]{Abdush Salam Pramanik\thanks{Electronic address: rs\_abdush@nbu.ac.in}}
\author[1]{Bibaswan Dey\thanks{Electronic address (Corresponding Author): bibaswandey@nbu.ac.in}}
\author[2]{Timir Karmakar\thanks{Electronic address: tkarmakar@nitm.ac.in}}
\author[1]{Kalyan Saha\thanks{Electronic address: kalyansaha@nbu.ac.in}}
\affil[1]{Department of Mathematics, University of North Bengal, Raja Rammohunpur, Darjeeling-734013, West Bengal, India}
\affil[2]{Department of Mathematics, National Institute of Technology Meghalaya, Shillong-793003, Meghalaya, India}
\date{}
\maketitle
\begin{abstract}
\noindent
The motivation of this present theoretical investigation comes from the delivery of various drugs and vaccines through subcutaneous injection (SCI) to the human body. In the SCI procedure, a medical person takes a big pinch of skin of the injection applicable area of a patient to pull the subcutaneous layer (SCL) from the underlying muscle layer to smooth the execution of the procedure. Generally, the human skin, particularly SCL, is a heterogeneous and anisotropic living material. The major constituents of the SCL are adipose cells or fat cells and interstitial fluid. These adipose cells are oriented in such a way that the hydraulic conductivity of the SCL exhibits anisotropy. Consequently, one can adopt the field equations of mixture theory to describe the continuum nature of SCL. This mathematical modeling involves a perturbation technique where the small aspect ratio of the SCL provides a valid perturb parameter. This study highlights the issue of the mechanical response of the adipose tissue in terms of the anisotropic hydraulic conductivity variation, the viscosity of the injected drug, the mean depth of subcutaneous tissue, etc. In particular, the computed stress fields can measure the intensity of pain experienced by a patient after this procedure. Also, this study discusses the biomechanical impact of the creation of one or more eddy structure(s) due to the high pressure developed, increased tissue anisotropy, fluid viscosity, etc., within the area of applying injection.
\end{abstract}
{\bf Keywords:} Subcutaneous Injection; Tissue Anisotropy; Adipose Cells; Line of Injection; Biphasic Mixture Theory; Composite Stream Function\\

\noindent
{\emph{MSC}}[2020] 92C10; 92C35; 92C50; 35B20; 76D05
\section{Introduction}
\noindent
Drug injection is a popular and efficient way to deliver a drug into biological tissues in order to get more appropriate results. Among the several injection techniques, subcutaneous injection (SCI) is a useful as well as highly effective corresponding to the medication of insulin, morphine, epinephrine, hydromorphone, diacetylmorphine, goserelin, metoclopramide, heparin, fertility drugs etc. inside fatty subcutaneous tissue immediately below the dermis layer (DL) \citep{kim2017effective}. SCI becomes advantageous for possible self-administration for the patients who need particular medicine regularly \citep{ogston2014subcutaneous}. Also, this technique becomes an alternative way of drug intake that results in better drug mobility for patients with poor venous access \citep{dychter2012subcutaneous,shapiro2013use}. In the context of safety and efficacy, the SCI is better than any other technique, such as intravenous or intramuscular injection \citep{allen2013ansel,prettyman2005subcutaneous}. A survey done by \citet{stoner2015intravenous} suggests that the subcutaneous route is preferable to the patients as compared to the intravenous route. For those patients who need multiple daily doses of one or more drug(s), SCI provides a broader range of alternative sites of injection than intramuscular injection \citep{haller2007converting}.\\

\noindent
To understand the detailed mechanism of SCI and the corresponding mechanical response of the tissue within the area of application of injection, it is our primary goal to understand the composition of the tissue at the injection site. The subcutaneous layer (SCL) is in general a composition of adipose tissue along with extracellular fluid \citep{geerligs2008linear,comley2011deep,derler2012tribology}. According to  \citet{shrestha2018fluid,shrestha2020imaging}, skin tissue behaves as a deformable porous medium that absorbs fluid as a result of the formation of a cavity under the local expansion of tissue rather than rupturing.\\

\noindent
The fluid flow through the tissue matrix during an intradermal injection is affected by its porosity and permeability. Hence, fluid flow and deformation of solid phases get coupled \citep{barry1992flow}. The porosity and permeability variation during an injection plays a vital role in controlling the accuracy of the amount of fluid injected into the skin at different stages of the injection \citep{shrestha2018fluid}. Consequently, one can control the dosage of a drug to be delivered. Identifying the field equations that govern the above phenomena would be essential. In this context, one can go through the classical study of \citet{oomens1987mixture} where skin tissue has been considered as a biphasic mixture of solid ($s$) and fluid ($f$) constituents. A set of non-linear field equations can describe the biphasic nature of skin tissue. In this context, one can go through the classical literature on field theories of mechanics for a detailed structure of governing field equations \mbox{\citep{truesdell1960classical,rajagopal1995mechanics,truesdell2004non,rajagopal1986boundary,gandhi1987some,wineman1992shear}}. For a wide range of authorship, those field equations of mixture theory are presented in some next-generation literature. In this study, we maintain adequate clarity to express the principal aspect of the modelling using the fundamental laws of mechanics, e.g., mass and momentum conservation equations. If $\rho_i$ and $\mathbf{V}_{i}$ represent apparent mass density and velocity of $i^{\textrm{th}}$ constituent ($i\in\{s,~f\}$) then the mass balance equation for $i^{\textrm{th}}$ constituent becomes
\begin{equation}\label{IEq1}
\frac{\partial \rho_i}{\partial t} + \nabla. (\rho_i \mathbf{V_i})=m_i,\hspace{0.5cm} i\in\{s,~f\},
\end{equation}
\begin{equation}\label{IEq2}
\sum_{i=s,f} m_{i} = 0.
\end{equation}
On the other hand, the balance of momentum for the $i^{\textrm{th}}$ constituent is given by
\begin{equation}\label{IEq3}
\rho_i \frac{D\mathbf{V}_i}{Dt} = \nabla . \mathbf{T}_i + \rho_i \mathbf{F}_i + \boldsymbol{\Pi}_i,    \hspace{0.5cm} i\in\{s,~f\},
\end{equation}
where ${D\mathbf{V}_i}/{Dt}= {\partial}/{\partial t} + (\nabla. \mathbf{V}_i)$ denotes
 the material derivative, $\mathbf{T}_i$ represents cauchy's stress tensor, $\mathbf{F}_i$ is the external body force corresponding to the $i^{\textrm{th}}$ constituent and $\boldsymbol{\Pi}_i$ is the interactive force on $i$-th constituent due to the other. In addition, the balance of momentum of the whole tissue matrix leads to
\begin{equation}\label{IEq4}
\sum_{i=s,f}(\boldsymbol{\Pi}_i + m_i \mathbf{V}_i) = 0.
\end{equation}
Similar sets of equations (\ref{IEq1})-(\ref{IEq4}) have been reported in several studies based on mixture theory \citep{rajagopal1995mechanics,rajagopal1986boundary,gandhi1987some,wineman1992shear,byrne2003modelling,rajagopal2007hierarchy,kumar2018nutrient,dey2021mathematical,anguiano2022mixture}. Depending on the physical behavior, for an appropriate modelling, a tissue can be considered as mixture of two or several fluid constituents. For example, while considering growth of a tissue, one can assume the cellular phase to behave as a fluid continuum \citep{byrne2003two,breward2003multiphase,rajagopal2010mechanics}. Therefore, the dynamics of each constituent can be governed by a similar fluid momentum equation. Each constituent can be distinguished from others in terms of viscosity \citep{chen2001influence,byrne2003modelling}. However, the situation would be different when flow-induced deformation of biological tissues is studied. \citet{barry1990comparison} compared the flow-induced deformation between soft biological tissues and polyurethane sponge through a mathematical model assuming the solid phase to behave as a poroelastic material. In this regard, the dynamics of the whole tissues are governed by the set of equations (\ref{IEq1})-(\ref{IEq4}) stated above. But the Cauchy stress tensor ($\mathbf{T}_s$) corresponding to the solid phase has to follow the stress relation for an elastic material. On the other hand, the fluid stress ($\mathbf{T}_f$) can depend either solely on the pore pressure \citep{barry1990comparison} or both the pore pressure and fluid viscous force \citep{barry1991fluid}. Note that any volumetric change in the tissue due to fluid-induced deformation is infinitesimal. \citet{barry1995injection} studied fluid injection as a point source into a deformable porous layer with both the boundaries impermeable to fluid flow using biphasic mixture theory. Later, \citet{barry1997deformation} extended their study by considering a set of boundary conditions where the upper surface is permeable to fluid flow. In this context, the models of \citet{li2009mathematical} are relevant to the SCI of insulin. Recently, \citet{shrestha2020imaging} have introduced a semi-empirical model based on experimental data and constitutive equations of flow through biological tissue that elucidates the fluid transport through skin tissue. Injecting fluid into the skin involves a coupled interaction with the deformation of the soft porous matrix of skin tissue since skin tissue is a deformable porous medium. Their study assumes the permeability ($K$) of any biological tissue (particularly the skin tissue), in the following form $K=K_{0}\exp(\mathcal{M}\mathcal{E}_{ij})$. $\mathcal{E}_{ij}$ denotes the volumetric strain in the solid matrix of skin tissue; $K_{0}$ and $\mathcal{M}$ are macroscopic material constants. This form of permeability is general; however,  other forms of permeability can also be discussed. The concept of anisotropic permeability comes when $K_{0}$ assumes a form of a square matrix, i.e., $K_{0}=K_{0_{ij}}$ when $ij$th entry is the permeability in the $ij$th direction concerning the coordinate axes. For simplicity, $\mathcal{M}$ can be chosen zero as a particular case. \\

\noindent
Besides the composition of tissue structure, tissue hydraulic conductivity or tissue permeability plays an important role in delivering the drug through the injection. If we consider a soft biological tissue as a deformable porous media \citep{barry1992flow}, it may consist either of an isotropic matrix whose permeability is the same along all directions \citep{mow1984fluid,holmes1985theoretical,wei2003flow,dey2016hydrodynamics,shrestha2020imaging}, or an anisotropic matrix whose permeability varies with direction \citep{rees1995effect,payne2001effect,karmakar2018effect,rajani2020anisotropic}. In particular, the anisotropic permeability may varies in the principal directions only \citep{kohr2008green,karmakar2017note}. The effects of anisotropic permeability have been observed in the various study of articular cartilage \citep{reynaud2006anisotropic,federico2008anisotropy,iatridis1998degeneration}. \citet{federico2008permeability} studied the effects of anisotropic permeability in a biological tissue filled with interstitial fluid and reinforced by impermeable collagen fibers. Most of the previous studies are mainly based on the isotropic nature, and there are few involving the anisotropic behavior of biological tissue. But due to the variations in the distribution of collagen fibers, soft connective tissue can show anisotropic behavior \citep{holzapfel2001biomechanics}. SCL is a soft connective layer of tissue; it possesses anisotropic permeability. As reported by \citet{kim2017effective} for a fixed flow rate, vertical permeability of skin tissue is greater than horizontal permeability, and there is no strong evidence that the converse may not hold. This incident motivates us to think about the situation when the horizontal permeability is greater than the vertical permeability. Therefore in this study, we consider the anisotropic nature of the SCL in the case mentioned above, which may be an interesting topic.\\

\noindent
A detailed literature review indicates the lacuna in mathematical modeling of fluid flow problems inside soft connective tissues, including both anisotropic and deformable nature. Consequently, this article introduces a mathematical model to discuss drug delivery through an SCI. The subcutaneous tissue region has a biphasic description of two main constituents fat adipose cells (AC) and interstitial fluid (IF). In addition, the interstitial hydraulic conductivity has anisotropic, which varies in the principal directions only. In this current study, our primary aim is to discuss the mechanical response of the SCL in terms of the variation of anisotropic hydraulic conductivity, the viscosity of the injected drug, the mean depth of the SCL, etc. In addition, we would like to study the pain realized by a patient near the injection site with the help of pressure gradient and shear stress.
\begin{figure}[h!]
\centering
\includegraphics[scale=0.6]{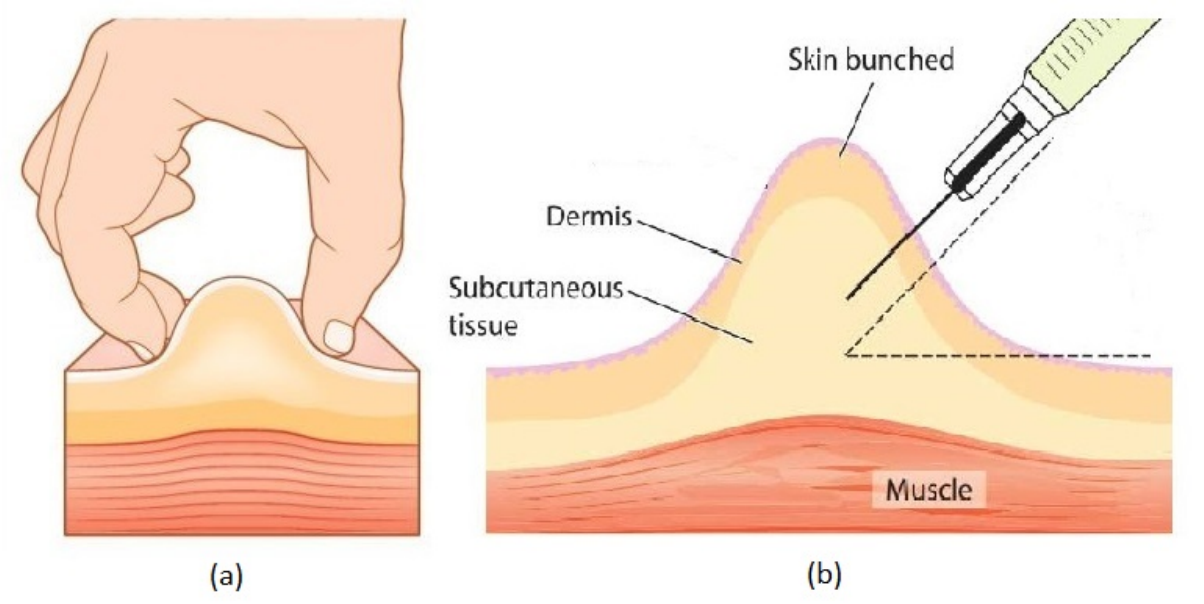}
\caption{Cartoon diagram of subcutaneous injection (SCI): (a) Skin is bunching during injection, (b) Needle injecting after skin bunched \citep{shepherd2018injection}.}
\label{Carton}
\end{figure}
\section{Mathematical formulation}
\noindent
SCIs are typically used to administer drugs and vaccines into the fatty tissue layer (subcutaneous tissue) sandwiched between the dermis layer (DL) and muscle layer (ML). While injecting a fluid containing drugs or vaccines into a patient, the lifted skin fold technique must be used to avoid the risk of muscle damage. The best method is to lift the skin of the injection site to pull the fat tissue within the SCL away from the underlying ML and hold it for the entire duration of the injection procedure (Fig. \ref{Carton}).\\

\begin{figure}[h!]
\centering
\includegraphics[scale=0.7]{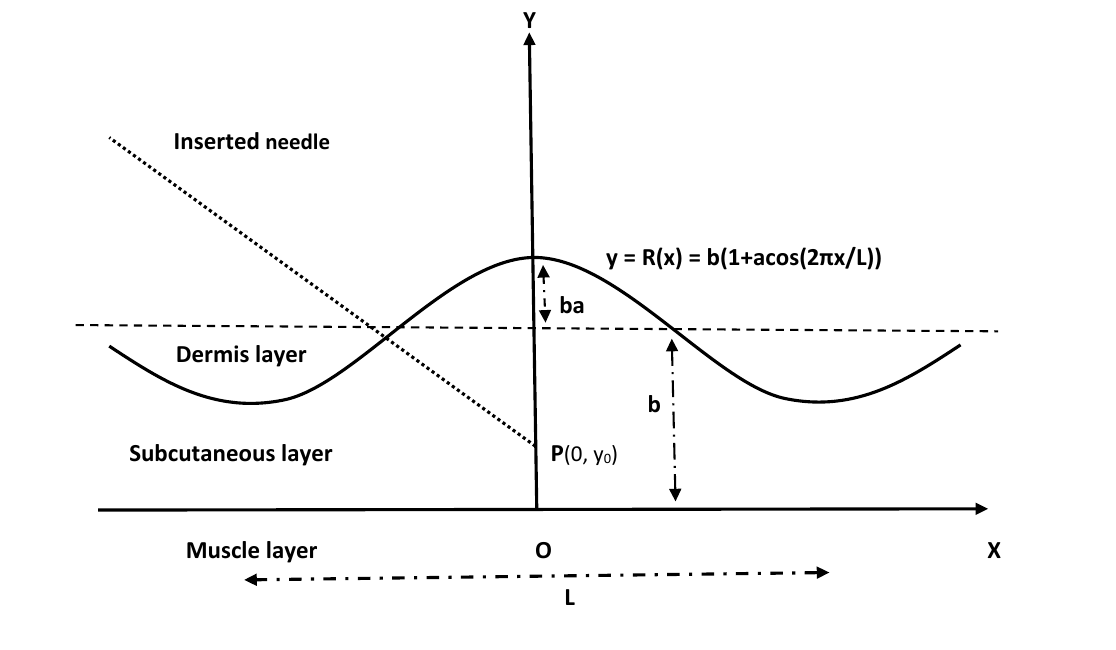}
\caption{Schematic of the mathematical model approximating the fluid injection process.}
\label{Geometry}
\end{figure}
\noindent
Referring to Fig. \ref{Geometry}, $(0,y_0)$ is considered as the point of injection, i.e., the position of the tip of the needle. Therefore, $b-y_{0}$ is the penetration depth from the interface of SCL and DL (may be regarded as SD interface). The SCL is initially bounded by a permeable upper DL (before injection) located at $y=b$ and a permeable lower ML at $y=0$. If a medical staff takes a big pinch of the patient's skin using the thumb and index fingers and holds, the SCL gets pulled away from the ML to make the injection easier. Consequently, the interface of SCL and DL (i.e., SD interface) assumes the shape of a cosine curve of the form $y=R(x)=b\left(1+a\cos(2\pi x/L)\right)$ within $x\in[-L/2, L/2]$ with $a$ as its amplitude such that $ba$ signifies the increased height of the SCL as a result of skin lifting.\\

\noindent
The present study mainly deals with the impact of tissue anisotropy on the motion of injected fluid rather than the time duration of injection. Consequently, one may assume the SCI process described here is time independent. Two major components of the SCL are IF and AC. In general, the cells within SCL are oriented in such a pattern that the overall permeability becomes anisotropic. Subsequently, the SCL can be considered as a deformable porous media or poroelastic media with anisotropic permeability where the cellular phase acts as solid material and the governing equations of biphasic mixture theory can describe the fluid mechanical process within SCL. However, there will be an issue regarding decoupling of one phase from the other in the governing equations under the assumption of steady state \citep{dey2016hydrodynamics,dey2016theoretical}. It must be emphasized that the momentum equation for the fluid component does not take into account the solid displacement term. In contrast, the solid constituent's momentum equation does have the fluid velocity term. This may be regarded as a one-way coupling between two principal constituents present within the SCL. In order to avoid such situation, both IF and AC can be assumed to be in the fluid phase with different viscosities \citep{byrne2003two,breward2003multiphase}. After successful adminstration of a drug loaded fluid through SC injection, the injected fluid becomes a part of the IF as we consider both of them to have a similar dynamic viscosity. If $\mu^{(\textrm{IF})}_{f}$ and $\mu^{(\textrm{D})}_{f}$ are the viscosities, $\varphi^{(\textrm{IF})}_{f}$ and $\varphi^{(\textrm{D})}_{f}$ are the volume fractions of IF and injected drug-loaded fluid respectively, the effective viscosity of IF in presence of injected drug loaded fluid becomes $\mu_{f}\thickapprox \varphi^{(\textrm{IF})}_{f}\mu^{(\textrm{IF})}_{f} + \varphi^{(\textrm{D})}_{f}\mu^{(\textrm{D})}_{f}$. On the other hand, $\mu_c$ denotes the viscosity of the adipose cell.\\

\noindent
A two-dimensional steady motion of these two incompressible fluids within the SCL is considered. SCL does not allow fast absorption of an injected drug due to the presence of fewer blood vessels \citet{barbieri2014skin}. Also, within the fat tissue of SCL, the interstitial gap is expected to be small. Consequently, the motions of two incompressible fluid constituents (IF and AC) are governed by Brinkman type equations where the interactions between constituent phases are involved and given by \citep{rajagopal2007hierarchy,dey2016hydrodynamics}
\begin{equation}\label{MEq1}
-\varphi_f \nabla P + \left(\lambda_f + \mu_f\right) \nabla (\nabla. \mathbf{u_f}) + \mu_f \nabla^2 \mathbf{u_f} - \mu_f \mathbf{K}^{-1}(\mathbf{u}_f-\mathbf{u}_c)=0,
\end{equation}
\begin{equation}\label{MEq2}
-\varphi_c \nabla P + \mu_c \nabla^2 \mathbf{u_c} + \mu_f \mathbf{K}^{-1}(\mathbf{u_f}-\mathbf{u_c})=0,
\end{equation}
In addition, above momentum equations are supported by the following equations of mass conservation:
\begin{equation}\label{MEq3}
\nabla.(\varphi_f \mathbf{u_f})=F(x,y),
\end{equation}
and
\begin{equation}\label{MEq4}
\nabla.(\varphi_c \mathbf{u_c})=0,
\end{equation}
where $\mathbf{u}_{f}=(u_{f}, v_{f})$ and $\mathbf{u_c}=(u_{c}, v_{c})$ are the velocity vector for IF and AC respectively; $\varphi_f$ and $\varphi_c$ are the volume fraction for IF and AC respectively with $\varphi_f + \varphi_c = 1$; $F(x,y)$ represents the source corresponding to the fluid injection; $P$ is the hydrodynamic pressure. If we drop adipose cell velocity and IF velocity from Eqs. (\ref{MEq1}) and (\ref{MEq2}) respectively, we get Brinkman equation (governing momentum equations for flow through a rigid porous media) \citep{karmakar2017note,savatorova2011homogenization,dey2014mass} in each case. Note that the second term on the left-hand side of Eq. \mbox{(\ref{MEq1})} arises due to the nonzero source term $F(x,y)$ in the mass conservation equation \mbox{(\ref{MEq3})} corresponding to IF. However, in Eq. \mbox{(\ref{MEq2})} no such term arises due to the absence of sources in the mass conservation equation \mbox{(\ref{MEq4})} for AC since the proliferation of AC is neglected. \\

\noindent
Considering the arbitrary orientation of fat and connective tissues within the SCL, the permeability tensor $\mathbf{K}$ possesses both non-zero off-diagonal entries along with dissimilar diagonal elements. In other words, $\mathbf{K}$ may depend on the anisotropic angle $\phi$ between the horizontal direction and the principal axis \citep{degan2002forced,karmakar2018effect}. The scenario at hand bears striking similarities to the flow through a porous material that is both deformable and anisotropic. Such materials typically exhibit three distinct sets of directions, including the principal axes of stress and strain, which possess clear and unambiguous physical properties. Detailed anisotropic description of such solid materials can be found in \mbox{\citet{truesdell1960classical,truesdell2004non}}. Our study solely concentrates on a basic scenario where the permeability tensor $\mathbf{K}$ exclusively comprises components that align with the principal directions \mbox{\citep{kohr2008green,karmakar2017note}}:
\begin{equation}
\mathbf{K} =
\begin{bmatrix}
K_1 & 0\\
0 & K_2
\end{bmatrix},
\end{equation}
with $K_1$ and $K_2$ are the permeabilities along the $x$ and $y$ directions (i.e. in principal directions) respectively.
\subsection{Boundary conditions:}
\noindent
In order to proceed further and derive the solution of the problem, we consider the following boundary conditions:\\

\noindent
(i) At the interface of SCL and DL (SD interface), denoted as $y=R(x)$, the tangential and normal velocity components are assumed as follows:
\begin{equation}\label{BC1}
 \mathbf{u}_{f}.\hat{\mathbf{t}}= 0~~~\textrm{and}~~~\mathbf{u}_{f}.\hat{\mathbf{n}}= V_D(x),
\end{equation}
where $\hat{\mathbf{t}}$ and $\hat{\mathbf{n}}$ are the unit tangent and normal vector respectively on $y=R(x)$. This condition implies that the tangential velocity of IF vanishes at the SD interface, while the normal velocity is equal to the vertical permeation $V_{D}(x)$ through the interface. On the other hand, at $y=R(x)$, both the velocity components of AC are set to zero:
\begin{equation}\label{BC2}
 u_{c}=0~~~\textrm{and}~~~v_{c}=0.
\end{equation}
(ii) At $y=0$, corresponding to the interface between the SCL and the ML (which may be regarded as the SM interface), the horizontal (tangential) and vertical (normal) velocity components are assumed to follow as
\begin{equation}\label{BC3}
 u_{f} = \lambda_{s_{f}} \frac{\partial u_f}{\partial y} \text{    ,    } v_{f} = V_M(x),
\end{equation}
\begin{equation}\label{BC4}
 u_{c}=\lambda_{s_{c}} \frac{\partial u_c}{\partial y} \text{     and    } v_{c}=0 .
\end{equation}
where $\lambda_{s_{f}}$ and $\lambda_{s_{c}}$ are the slip coefficients.\\

\noindent
The injected fluid (containing drug molecules) dissolved within the interstitial fluid of the SCL can undergo vertical permeation through both the SD and SM interfaces, determined by $V_{D}(x)$ and $V_{M}(x)$ respectively. However, it is reasonable to consider that the ACs within the SCL are static on both the SD and SM interfaces so that they cannot permeate through due to their high density and size. These assumptions are reasonable based on the physical characteristics of adipose cells \mbox{\citep{stenkula2018adipose}}. Therefore, both the SD and SM interfaces act as porous filter which permits IF to undergo a vertical permeation but restricts ACs. An equivalent situation is observed when fluid diffuses through an elastic solid undergoing large deformations \mbox{\citep{prasad2006diffusion}}. Note that corresponding to the constant flow rate, we can get $V_D(x)$ is equal to $V_M(x)$ ( see Theorem 1). The expressions of $V_D(x)$ or $V_M(x)$ are considered later.\\

\noindent
The structural difference between the SCL and ML suggests the permeability variation. According to \mbox{\citet{kim2017effective}} corresponding to the same flow rate in horizontal and vertical directions, the horizontal permeability of the SCL is found to be higher than that of the ML. On the other hand, during the injection procedure, a horizontal motion at the SM interface may be noted due to the variation of fluid pressure (generated from the injection site) in the axial direction. Such motion can be characterized by the boundary conditions Eq. \mbox{(\ref{BC3})-(\ref{BC4})} as proposed by \mbox{\citet{beavers1967boundary}}, \mbox{\citet{jones1973low}}, \mbox{\citet{karmakar2018squeeze}}, and \mbox{\citet{hill2008poiseuille}} at the SM interface are regarded as slip conditions where the parameter $\lambda_s$ (called slip coefficient) is directly proportional to the length scale same as $\sqrt{K_1}$. In particular, we consider $\lambda_{s_{f}}=\varphi_{f}\lambda_{s}$ and $\lambda_{s_{c}}=\varphi_{c}\lambda_{s}$ (where $\lambda_{s}$ is the common slip coefficient at the SD interface).\\

\noindent
(iii) Flux condition:
Let $Q$ be the volumetric flow rate across the region which is given by
\begin{equation}\label{BC5}
Q=\int_{0}^{R(x)} (\varphi_{f}u_{f} + \varphi_{c} u_{c}) \hspace{0.01 cm}dy .
\end{equation}
\subsection{Non-dimensionalisation}
\noindent
We introduce the following non-dimensional variables:\\
$x'=x/L$, $y'=y/b$, $\delta = b/L$, $u_{i}'=u_i/(Q/b)$, $v_{i}'=v_i/(Q/L)$, $p'=p/(\mu_f QL/K_{1}b),$ for $i=f, c$ in Eqs. (\ref{MEq1})-(\ref{MEq4}) to make them dimensionless. Accordingly, we obtain following dimensionless parameters those appear in the dimensionless governing equations: $\mu_r=\mu_f/\mu_c$, $\text{Da}=K_1/b^2$, $\lambda^2=K_1/K_2$, $V'_D=V_D/(Q/L)$, $V'_M=V_M/(Q/L)$, $\lambda'_s=\lambda_s/L$ and $F'=F/(Q/bL)$.\\

\noindent
Here, $\delta$ denotes the aspect ratio of the SCL; $\textrm{Da}$ is the ease of interstitial fluid percolation through the SC tissue in the horizontal direction (equivalent to the Darcy number); $\mu_r$ is the ratio of the IF to the AC and $\lambda^2$ is the ratio of horizontal permeability to the vertical permeability, which can be referred to as the anisotropic ratio.\\

\noindent
Correspondingly, the non-dimensional governing equations can be written as (omitting dash)\\
\begin{equation}\label{NEq1}
\delta^2\left(1+\frac{\lambda_{f}}{\mu_{f}}\right)\frac{\partial}{\partial x} \left(\frac{\partial u_f}{\partial x} + \frac{\partial v_f}{\partial y}\right) + \left( \delta^2 \frac{\partial^2 u_f}{\partial x^2} + \frac{\partial^2 u_f}{\partial y^2}  \right) - \alpha^2\left[\left(u_f - u_c\right)+ \varphi_{f}\frac{\partial p}{\partial x}\right]=0,
\end{equation}
\begin{equation}\label{NEq2}
 \delta^2\left(1+\frac{\lambda_{f}}{\mu_{f}}\right)\frac{\partial}{\partial y} \left(\frac{\partial u_f}{\partial x} + \frac{\partial v_f}{\partial y}\right) + \delta^2 \left( \delta^2 \frac{\partial^2 v_f}{\partial x^2} + \frac{\partial^2 v_f}{\partial y^2}  \right) - \alpha^{2}\left[\delta^2 \lambda^2 \alpha^2(v_f - v_c)+\varphi_{f} \frac{\partial p}{\partial y}\right]  = 0,
\end{equation}
\begin{equation}\label{NEq3}
\left(\delta^2\frac{\partial^2 u_c}{\partial x^2}+\frac{\partial^2 u_c}{\partial y^2}\right)+ \left[\alpha^2\mu_r(u_f-u_c)-\varphi_c\frac{\partial p}{\partial x} \right]=0,
\end{equation}
\begin{equation}\label{NEq4}
\delta^2 \left (\delta^2 \frac{\partial^2 v_c}{\partial x^2}+\frac{\partial^2 v_c}{\partial y^2}\right)+ \alpha^2\mu_r\left[\delta^2 \lambda^2 (v_f-v_c)-\varphi_c\frac{\partial p}{\partial y}\right]=0,
\end{equation}
\begin{equation}\label{NEq5}
\left(\frac{\partial u_f}{\partial x} + \frac{\partial v_f}{\partial y}\right)= F(x,y),
\end{equation}
\begin{equation}\label{NEq6}
\left(\frac{\partial u_c}{\partial x} + \frac{\partial v_c}{\partial y}\right)= 0,
\end{equation}\\
where $\alpha^2=1/\text{Da}$. Here we consider Stokes hypothesis by taking $2\lambda_f + 3 \mu_f = 0$, i.e. ${\lambda_{f}}/{\mu_{f}}=-3/2$ \citep{dey2016hydrodynamics}.
Also the corresponding boundary conditions are (dropping dash) \\\\
(i) on $y=R(x)=1+a\cos(2\pi x),$\\
\begin{equation}\label{BC6}
u_{f}=2\pi a \delta^{2}\frac{\sin(2\pi x)}{\sqrt{1+4\pi^{2}a^{2}\delta^{2}\sin(2\pi x)}}V_D(x), \hspace{0.1cm} v_{f}= \frac{1}{\sqrt{1+4\pi^{2}a^{2}\delta^{2}\sin(2\pi x)}}V_D(x),
\end{equation}
\begin{equation}\label{BC7}
u_{c}=0~~\textrm{and}~~v_{c}=0.
\end{equation}
(ii) on $y=0,$
\begin{equation}\label{BC8}
u_{f} = \varphi_{f} \lambda_s \frac{\partial u_f}{\partial y}, \hspace{0.01cm} v_{f}=V_{M}(x) ,
\end{equation}
\begin{equation}\label{BC9}
u_{c} = \varphi_{c} \lambda_s \frac{\partial u_c}{\partial y} \text{     and    } v_{c}=0  .
\end{equation}
(iii) Flux condition: The non-dimensional volumetric flow rate is given by \\
\begin{equation}\label{BC10}
1=\int_{0}^{R(x)} (\varphi_{f}u_{f}+\varphi_{c}u_{c}) \hspace{0.01 cm}dy .
\end{equation}
\section{Perturbation approximation}
\noindent
In order to solve the above boundary value problem, we can use the perturbation method to find an approximate solution \citep{tsangaris1984laminar,wei2003flow,karmakar2017note}. We assume that the aspect ratio ($\delta$) of the region is small , i.e. $\delta^2 \ll 1$, and this allows us to apply perturbation theory. Accordingly, with respect to the small parameter $\delta^2$, we can expand the velocity and pressure in a perturbation series as:
\begin{equation}\label{PEq1}
(u_i, v_i, p)= (u_{i_0}, v_{i_0}, p_{0}) + \delta^2 (u_{i_1}, v_{i_1}, p_1) + \text{O}(\delta^4), \hspace{0.9cm}   \text{for } i=f, c.
\end{equation}
The first order correction is $\delta^2$, since no terms of order $\delta$ appear in the governing equations and boundary conditions. The flow field is solved by collecting the similar powers of $\delta^2$.
\subsection{The leading-order problem}
The governing equations reduce to
\begin{equation}\label{Eq1}
\frac{\partial^2 u_{f_0}}{\partial y^2} - \alpha^2\left[\left(u_{f_0} - u_{c_0}\right)+\varphi_f\frac{\partial p_0}{\partial x}\right] = 0,
\end{equation}
\begin{equation}\label{Eq3}
\frac{\partial^2 u_{c_0}}{\partial y^2} + \alpha^2\mu_r\left[\left(u_{f_0} - u_{c_0}\right)-\varphi_c\frac{\partial p_0}{\partial x}\right] = 0,
\end{equation}
\begin{equation}\label{Eq4}
\frac{\partial p_{0}}{\partial y} = 0,
\end{equation}
\begin{equation}\label{Eq5}
\left(\frac{\partial u_{f_0}}{\partial x} + \frac{\partial v_{f_0}}{\partial y}\right)= F(x,y),
\end{equation}
\begin{equation}\label{Eq6}
\left(\frac{\partial u_{c_0}}{\partial x} + \frac{\partial v_{c_0}}{\partial y}\right)= 0.
\end{equation}\\
The corresponding boundary conditions are\\\\
(i)  on $y=R(x),$\\
\begin{equation}\label{BC11}
 u_{f_0}=0, v_{f_0}=V_D(x),
 \end{equation}
\begin{equation}\label{BC12}
 u_{c_0}=0 \text{     and    } v_{c_0}=0  .
\end{equation}
(ii) on $y=0$\\
\begin{equation}\label{BC13}
u_{f_0}= \lambda_s \frac{\partial u_{f_0}}{\partial y}, v_{f_0}=V_M(x),
\end{equation}
\begin{equation}\label{BC14}
u_{c_0}= \lambda_s \frac{\partial u_{c_0}}{\partial y}\text{     and    } v_{c_0}=0  .
\end{equation}
(iii) Flux condition
\begin{equation}\label{BC15}
1=\int_{0}^{R(x)} (\varphi_{f}u_{f_0} + \varphi_{c} u_{c_0}) \hspace{0.01 cm}dy .
\end{equation}
\begin{theorem}
The vertical permeation velocities $V_D(x)$ and $V_M(x)$ are equal when
$$1=\int_{0}^{R(x)} (\varphi_{f}u_{f}+\varphi_{c}u_{c}) \hspace{0.01 cm}dy$$.
\end{theorem}
\begin{proof}
If we differentiate
$$\int_{0}^{R(x)} (\varphi_{f}u_{f_0} + \varphi_{c} u_{c_0}) \hspace{0.01 cm}dy=1,$$
under the integration sign using Leibnitz rule, we obtain
$$(\varphi_{f}u_{f_0} + \varphi_{c} u_{c_0})(x,R(x)) \frac{dR(x)}{dx} + \int_{0}^{R(x)} \left(\varphi_{f} \frac{\partial u_{f_0}}{\partial x} + \varphi_{c} \frac{\partial u_{c_0}}{\partial x}\right) \hspace{0.01 cm}dy = 0,$$
which implies
$$\left(\varphi_{f}u_{f_0}(x,R(x)) + \varphi_{c} u_{c_0}(x,R(x))\right) \frac{dR(x)}{dx} + \int_{0}^{R(x)} \left(\varphi_{f} (-\frac{\partial v_{f_0}}{\partial y}) + \varphi_{c} \left(-\frac{\partial v_{c0}}{\partial y}\right)\right) \hspace{0.01 cm}dy = 0.$$
After some simplification,
$$(\varphi_{f}v_{f_0} + \varphi_{c} v_{c_0})(y=R(x)) - (\varphi_{f}v_{f_0} + \varphi_{c} v_{c_0})(y=0) = 0.$$
Hence, $$V_D(x)=V_M(x).$$
\end{proof}
\subsection{The \text{O}($\delta^2$) problem}
The governing equations corresponding to the first order are
\begin{equation}\label{Eq7}
-\varphi_{f} \alpha^{2}\frac{\partial p_1}{\partial x} +\left(1+\frac{\lambda_{f}}{\mu_{f}}\right) \frac{\partial}{\partial x}\left(\frac{\partial u_{f_0}}{\partial x}+\frac{\partial v_{f_0}}{\partial y}\right) + \left(\frac{\partial^2 u_{f_0}}{\partial x^2} + \frac{\partial^2 u_{f_1}}{\partial y^2}\right) - \alpha^2(u_{f_1}-u_{c_1})=0,
\end{equation}
\begin{equation}\label{Eq8}
-\varphi_{f} \alpha^{2}\frac{\partial p_1}{\partial y}+\left(1+\frac{\lambda_{f}}{\mu_{f}}\right) \frac{\partial}{\partial y}\left(\frac{\partial u_{f_0}}{\partial x}+\frac{\partial v_{f_0}}{\partial y}\right) + \frac{\partial^2 v_{f_0}}{\partial y^2} -  \alpha^2 \lambda^2(v_{f_0}-v_{c_0}) = 0,
\end{equation}
\begin{equation}\label{Eq9}
-\varphi_{c} \mu_{r} \alpha^{2}\frac{\partial p_1}{\partial x} + \left(\frac{\partial^2 u_{c_0}}{\partial x^2} + \frac{\partial^2 u_{c_1}}{\partial y^2}\right) + \mu_{r}\alpha^2(u_{f_1}-u_{c_1})=0,
\end{equation}
\begin{equation}\label{Eq10}
-\varphi_{c} \mu_{r} \alpha^{2}\frac{\partial p_1}{\partial y} + \frac{\partial^2 v_{c_0}}{\partial y^2}  + \mu_{r}\alpha^2\lambda^2(v_{f_0}-v_{c_0})=0,
\end{equation}
\begin{equation}\label{Eq11}
\varphi_{f}\left(\frac{\partial u_{f_1}}{\partial x} + \frac{\partial v_{f_1}}{\partial y} \right) = 0,
\end{equation}
\begin{equation}\label{Eq12}
\varphi_{c}\left(\frac{\partial u_{c_1}}{\partial x} + \frac{\partial v_{c_1}}{\partial y} \right) = 0.
\end{equation}\\
Corresponding boundary conditions reduce to\\\\
(i)  on $y=R(x),$\\
\begin{equation}\label{BC16}
 u_{f_1}=2\pi a^{2}\sin(2\pi x)V_{D}(x),\hspace{0.2cm} v_{f_1}=-2\pi^{2} a^{2}\sin(2\pi x)V_{D}(x),
\end{equation}
\begin{equation}\label{BC17}
u_{c_1}=0 \text{     and    } v_{c_1}=0  .
\end{equation}
(ii) on $y=0,$\\
\begin{equation}\label{BC18}
u_{f_1}=  \lambda_s \frac{\partial u_{f_1}}{\partial y},\hspace{0.2cm} v_{f_1}=0,
\end{equation}
\begin{equation}\label{BC19}
u_{c_1}= \lambda_s \frac{\partial u_{c_1}}{\partial y} \text{     and    } v_{c_1}=0  .
\end{equation}
(iii) Flux condition \\
\begin{equation}\label{BC20}
0=\int_{0}^{R(x)} (\varphi_{f}u_{f_1} + \varphi_{c} u_{c_1}) \hspace{0.01 cm}dy.
\end{equation}
\subsection{Structure of the source term $F(x,y)$}
\noindent
In Eq. (\ref{Eq5}), $F(x,y)$ represents the tip of the needle of the syringe at some point inside the SCL. One can think of a point source at the point $(0,y_{0})$ (see Fig. \ref{Geometry}) which can be expressed as
\begin{equation}\label{Eq13}
F(x,y)=m_{0}\delta(x)\delta(y-y_{0}),
\end{equation}
where $m_{0}$ represents the strength of the point source. One can solve the leading order and first order equations using finite difference scheme by discretizing the domain while keeping the point $(0,y_{0})$ outside the meshgrid. However, one can attempt for the analytical solution in the regions $y<y_0$ and $y>y_0$ for all values of $x$. Since $y=y_{0}$ is a line on which injection point (tip of needle) must lie, thus the following conditions at $y=y_0$ can be used to match the solution :
$$u_{f_0}(x,y^+_{0}) = u_{f_0}(x,y^-_{0}),$$
$$u_{c_0}(x,y^+_{0}) = u_{c_0}(x,y^-_{0}),$$
and\\
$$\frac{\partial u_{f_0}(x,y)}{\partial y}\Big|_{y=y^+_0} = \frac{\partial u_{f_0}(x,y)}{\partial y}\Big|_{y=y^-_0}  ,$$
$$\frac{\partial u_{c_0}(x,y)}{\partial y}\Big|_{y=y^+_0} = \frac{\partial u_{c_0}(x,y)}{\partial y}\Big|_{y=y^-_0}  .$$
\subsection{Subcutaneous tissue velocity and stream function}
We define subcutaneous tissue velocity or composite velocity $\mathbf{u}=(u,v)$ of the mixture of IF and AC presents in the SCL as
\begin{equation}\label{Eq14}
  \mathbf{u} = \varphi_f\mathbf{u}_{f} + \varphi_c\mathbf{u}_{c}
\end{equation}
If $\Phi_f$ and $\Phi_c$ are the stream function of the IF and AC respectively that are present in the SCL, then their relation with the velocity components are
 $$u_{f}=\frac{\partial \Phi_{f}}{\partial y}, u_{c}=\frac{\partial \Phi_{c}}{\partial y}  \text{       and      } v_{f} = - \frac{\partial \Phi_{f}}{\partial x}, v_{c} = - \frac{\partial \Phi_{c}}{\partial x} .$$
 Also, we define the composite stream function of the mixture as
\begin{equation}\label{Eq15}
  \Phi = \varphi_f\Phi_{f} + \varphi_c\Phi_{c}
\end{equation}
In the SCL, the quantity of IF is much larger than the quantity of AC. Thus the subcutaneous tissue velocity or composite velocity and stream function can be considered in macroscopic level \citep{barry1997deformation,barry1991fluid}. The spreading of the injected fluid within the interstitial space of the subcutaneous tissue can be manifested by the streamline pattern exhibited by the composite motion of IF and AC.
\begin{theorem}
If $\Phi_i (i=f,c)$ are the stream functions defined as
  \[\Phi_i = \begin{cases}
      \Phi_i^{(1)}, & 0\leq y<y_0 \\
      \Phi_i^{(2)}, & y_0 < y \leq R(x)
   \end{cases}
\]
and there exists $\psi_{i}$ satisfying $\psi_{i}=\int u_{i} \hspace{0.03 cm} dy$ such that
\[\psi_i = \begin{cases}
      \psi_i^{(1)}, & 0\leq y<y_0 \\
      \psi_i^{(2)}, & y_0 < y \leq R(x),
   \end{cases}
\]
then the following relations hold
$$\Phi_i^{(1)}(x,y) = \psi_{i}^{(1)}(x,y) + f_{i}(x) ,$$
$$\Phi_i^{(2)}(x,y) = \psi_{i}^{(2)}(x,y) + g_{i}(x) ,$$
for arbitrary function $f_{i}(x)$ and $g_{i}(x)$.
\end{theorem}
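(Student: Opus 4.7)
The plan is to exploit the fact that both $\Phi_i$ and $\psi_i$ have the same $y$-derivative, so their difference must be independent of $y$ on each region where they are defined.

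First I would write down the two defining relations on each of the two subregions $0\le y<y_0$ and $y_0<y\le R(x)$. By the definition of the stream function recalled just before the theorem, $u_i=\partial\Phi_i/\partial y$, so in particular $u_i=\partial\Phi_i^{(1)}/\partial y$ on the lower region and $u_i=\partial\Phi_i^{(2)}/\partial y$ on the upper region. By the hypothesis that $\psi_i=\int u_i\,dy$ (understood as an antiderivative in the $y$ variable), we likewise have $\partial\psi_i^{(1)}/\partial y=u_i$ on the lower region and $\partial\psi_i^{(2)}/\partial y=u_i$ on the upper region.

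Next I would form the differences $\Phi_i^{(1)}-\psi_i^{(1)}$ and $\Phi_i^{(2)}-\psi_i^{(2)}$ and differentiate in $y$. In each case the $y$-derivative vanishes identically:
\begin{equation*}
\frac{\partial}{\partial y}\bigl(\Phi_i^{(1)}-\psi_i^{(1)}\bigr)=u_i-u_i=0,\qquad
\frac{\partial}{\partial y}\bigl(\Phi_i^{(2)}-\psi_i^{(2)}\bigr)=u_i-u_i=0.
\end{equation*}
On each simply connected strip in $(x,y)$, a smooth function whose $y$-partial is zero is a function of $x$ alone. Hence there exist functions $f_i(x)$ and $g_i(x)$ with $\Phi_i^{(1)}(x,y)=\psi_i^{(1)}(x,y)+f_i(x)$ and $\Phi_i^{(2)}(x,y)=\psi_i^{(2)}(x,y)+g_i(x)$, which is the claim.

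Honestly, the statement is essentially a restatement of the fact that the indefinite integral is unique up to an additive constant, promoted to parameter-dependent form, so I do not anticipate any substantive obstacle. The only mild care point is to note that the argument is applied independently on the two open strips separated by the line $y=y_0$ (since the source term $F$ in \eqref{Eq13} places a singularity there), which is why two independent integration functions $f_i$ and $g_i$ arise rather than a single one; any further relation between $f_i$ and $g_i$ would come from the matching conditions at $y=y_0$ stated in Section 3.3 and is not part of this theorem.
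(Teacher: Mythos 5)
Your argument is correct for what the theorem literally asserts, and its core step --- that $\Phi_i^{(k)}-\psi_i^{(k)}$ has vanishing $y$-derivative on each strip and hence depends on $x$ alone --- is exactly the observation with which the paper's proof opens (there phrased as $\Phi_i=\int u_i\,dy+h(x)$ with the integration ``constant'' a function of $x$, applied separately on $0\le y<y_0$ and $y_0<y\le R(x)$). Where you and the paper diverge is in how much of ``can be obtained'' is taken to be the content of the theorem. You stop at existence and explicitly set aside the matching at $y=y_0$; the paper spends most of its proof actually determining $f_i$ and $g_i$: continuity of the stream function across $y=y_0$ expresses $f_i$ in terms of $g_i$, and differentiating $\Phi_i^{(2)}=\psi_i^{(2)}+g_i$ with respect to $x$ together with $v_i^{(2)}=-\partial\Phi_i^{(2)}/\partial x$, evaluated on $y=R(x)$ where $v_i^{(2)}$ is prescribed by the boundary conditions, yields $g_i'(x)$ and hence $g_i$ by an $x$-integration, after which $f_i$ follows. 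Your reading is defensible --- the displayed relations are pure existence statements, and your zero-$y$-derivative argument settles them cleanly, with the correct remark that the two independent functions arise because the source at $(0,y_0)$ splits the domain --- but the constructive half is what the authors subsequently use to assemble the composite stream function plotted in the results, so a full substitute for the paper's proof should at least indicate, as your closing sentence only gestures at, that $g_i$ is fixed by the known normal velocity on $y=R(x)$ and $f_i$ then by continuity at $y=y_0$.
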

\begin{proof}
The stream function is related to the velocity components by the relations
$$u_{i}=\frac{\partial \Phi_{i}}{\partial y} \text{       and      } v_{i} = - \frac{\partial \Phi_{i}}{\partial x} .$$
Considering the first relation $$u_{i}=\frac{\partial \Phi_{i}}{\partial y} ,$$
or $$\Phi_{i} = \int u_{i}\hspace{0.03 cm} dy + h(x) ,$$
where $h(x)$ is obtained from the integration. Let $\psi_{i}(x,y)=\int u_{i}\hspace{0.03 cm} dy$.

\noindent
Since $u_{i}$ and $v_{i}$ are defined in two regions (say $u_{i}^{(1)}$, $u_{i}^{(2)}$  and $v_{i}^{(1)}$, $v_{i}^{(2)}$), thus we have stream function in the regions as
$$\Phi_i^{(1)}(x,y) = \psi_{i}^{(1)}(x,y) + f_{i}(x), \hspace{1 cm} \text{if} \hspace{0.3 cm} 0\leq y<y_0 $$
$$\Phi_i^{(2)}(x,y) = \psi_{i}^{(2)}(x,y) + g_{i}(x). \hspace{1 cm}~~~~~~ \text{if} \hspace{0.3 cm} y_0\leq y<R(x) $$
Next we have to find the functions $f_{i}(x)$ and $g_{i}(x)$.\\

\noindent
Since stream function is continuous, thus we have
$$\Phi_i^{(1)}(x,y_0^{-})=\Phi_i^{(2)}(x,y_0^{+}) ,$$
which gives $$f_{i}(x) = \psi_{i}^{(2)}(x,y) - \psi_{i}^{(1)}(x,y) + g_{i}(x) .$$
Also since $$\Phi_i^{(2)}(x,y) = \psi_{i}^{(2)}(x,y) + g_{i}(x) ,$$
Upon taking derivative on both sides with respect to $x$, we have
$$ \frac{\partial \Phi_i^{(2)}}{\partial x} = \frac{\partial \psi_{i}^{(2)}}{\partial x} + g_{i}'(x) ,$$
which gives
$$ g_{i}'(x) = - \frac{\partial \psi_{i}^{(2)}}{\partial x} - v_{i}^{(2)}(x,y) .$$
Now integrating both sides, we get
$$g_{i}(x)=-\psi_{2}(x,y) - \int v_{i}^{(2)}(x,y) \hspace{0.03 cm} dx ,$$
which satisfies throughout the considered region.\\
Thus at $y=R(x)$, \\
$$g_{i}(x) = -\psi_{2}(x,R(x)) - \int v_{i}^{(2)}(x,R(x)) \hspace{0.03 cm} dx .$$
Since $v_{i}^{(2)}(x,R(x))$ is a known function (which we obtain using boundary conditions), thus we get $g_{i}(x)$ and using this we can obtain $f_{i}(x)$ easily from the continuity condition of stream function.
\end{proof}
\noindent
The detailed solution of the leading order and \text{O}($\delta^2$) problem corresponding to IF and AC are shown in Appendix A and Appendix B respectively.
\section{Results and Discussion}
\noindent
In this study, a flow-induced by fluid injection has been considered within the anisotropic SCL which is bounded by permeable DL from the topside and permeable ML from the bottom. As per the present mathematical model is concerned, the principal components of the SCL are AC (fat tissue) and IF with a large proportion of fluid part. Consequently, $\varphi_f$ is chosen within the range $0.7\leq \varphi_f \leq 0.9$ throughout the study (see \citet{khor1991potential,truskey2010transport}). All the flow parameters such as $\lambda$, $\mu_r$, $\delta$, $a$, $\text{Da}$, $\lambda_s$ are reported in Table \ref{table_1} with their reference ranges and are chosen based on experimental or theoretical studies already existed in literature. The analysis underlying the consideration of parameter ranges is discussed below. For example, $(Da)$ which is the ease of fluid percolation in the horizontal direction can be considered within the range $10^{-3}\leq Da\leq 5\times10^{-3}$ following \citet{dey2016hydrodynamics}. Based on the choice of $Da$, we specify the slip coefficient $\lambda_s$ within the range $0.001\leq \lambda_{s} \leq 0.05$ as the value of it is up to $\text{O}(\sqrt{\text{Da}})$.
\begin{table}[h!]
\centering
\begin{tabular}{p{42mm} p{40mm} p{40mm}}
\hline
 Parameter & Range & Remark \\
 \hline
 Anisotropic ratio ($\lambda$) & $0.5<\lambda\leq2$ & \citet{shrestha2020imaging}\\
 Viscosity ratio ($\mu_r$) & $0<\mu_r<1$ & Considered\\
$Da$ & $10^{-3}\leq Da\leq 5\times10^{-3}$ & \citet{dey2016hydrodynamics}\\
 Amplitude of the wavy layer ($a$) & $0.34\leq a\leq 0.5$ & Considered \\
 Aspect ratio of the region ($\delta$) & $0.1\leq\delta\leq 0.3$ & \citet{karmakar2017note}\\
 Slip coefficient ($\lambda_s$) & $0.001\leq\lambda_s\leq0.05$ & \citet{dey2016hydrodynamics}\\
 \hline
\end{tabular}
\caption{Various parameters involved in this study with their range.}
\label{table_1}
\end{table}
Note that $\lambda_{s}\rightarrow 0$ makes the adipose cellular phase rigid towards the squeezing effect at the SM interface due to the fluid pressure at the line of injection. On the other hand, the choice of $\lambda$ is prompted by the study of \citet{shrestha2020imaging} which reports the optimum value of permeability of skin tissue lies within the range $0.59\times 10^{-14}$ $\textrm{m}^{2}$ to $2.10\times 10^{-12}$ $\textrm{m}^{2}$. Therefore, one can consider $K_{1}$ and $K_{2}$ lying between the above range. Consequently, $\lambda$ lies within the range $0.53\leq \lambda \leq 2$. Except for $\lambda=1$ (isotropic), anisotropic behavior is exhibited for all values of $\lambda$ within the above range. The perturbation parameter $\delta$ is the ratio between the SCL depth and the SCL length trapped inside the thumb and pointer of the medical staff during the injection. Essentially, we have $\delta^2\ll1$ for this study. A similar parameter has been observed in the study of \citet{karmakar2017note} and consequently, we opt the magnitude of $\delta$ within the range $0.1\leq\delta\leq 0.3$. A sensitivity analysis associated with the choosing $\delta$ with the help of $\lambda$ is shown in Table \ref{table_3}. In general, the adipose cellular phase should have a higher viscosity compared to the interstitial fluid (IF) within the same continuum description. This is because adipose cells, or adipocytes, are lipid-rich cells with a lipid viscosity of $36.8$ mPa.s or $36.8\times 10^{-3}$ $\textrm{kg} \textrm{m}^{-1} \textrm{s}^{-1}$ \mbox{\citep{comley2010micromechanical}}, while the viscosity of the interstitial fluid can be approximated as $3.5 \times 10^{-3}$ $\textrm{kg} \textrm{m}^{-1} \textrm{s}^{-1}$ by following the study of \mbox{\citet{yao2012interstitial}}. Hence, the viscosity ratio $\mu_r=\mu_f/\mu_c$ can be chosen to lie within the range $0<\mu_r<1$.\\
\begin{table}[h!]
\centering
\begin{tabular}{p{20mm} p{20mm} p{20mm}}
 \hline
 $\lambda$ & $\delta^2(\ll1)$ & $\delta^2\lambda^2(\ll1)$ \\
 \hline
 1 & 0.09 & 0.09 \\
 1.5 & 0.09 & 0.2025\\
 1.75 & 0.09 & 0.2756\\
 2 & 0.09 & 0.36\\
 \hline
\end{tabular}
\caption{Various tissue anisotropic ratio ($\lambda$) magnitude, aspect ratio ($\delta$) value and corresponding value of $\delta^2\lambda^2$.}
\label{table_3}
\end{table}
\noindent
$V_{M}(x)$ represents the permeation velocity at the SM interface which can be expressed as follows
\begin{equation}\label{oth1}
V_{M}(x) = V_{M_0} + \sigma_{M} \left(\frac{d p_{0}}{d x}\right),
\end{equation}
where $V_{M_0}$ is the permeation velocity at constant leading order pressure. Note that Eq. (\mbox{\ref{oth1}}) is analogous to Darcy law defined at the SM interface. Understanding the impact of the coefficient $\sigma_{M}$ on permeation regulation is absolutely crucial, particularly in the presence of varying pressure. It should be noted that $V_{M}(x)$ is directly linked to the gradient of the leading order pressure, as it represents the vertical component of the interfacial velocity at $y=0$. Evidently, the minimum value of $V_{M}(x)$ is $V_{M_{0}}$, assuming a positive pressure gradient near the SM interface. In order to determine the minimum value of $V_{M}(x)$, it is imperative that we ensure ${d p_{0}}/{d x}$ is equal to zero on SM interface (please refer Fig. \mbox{\ref{Geometry}}). At the point $(x,y)=(0,0)$ on the SM interface, the skin pinching depth attains its maximum as per the schematic in Fig. \mbox{\ref{Geometry}}. Some algebra on ${d p_{0}}/{d x}$ obtains
\begin{equation}\label{oth1-2}
\frac{d p_{0}}{d x} = \frac{\alpha^2}{4} \mu_r(1+\mu_r) \left(a-\frac{1}{3}\right),~~~~\textrm{at}~~~~x=0.
\end{equation}
Clearly, $\left({d p_{0}}/{d x}\right)_{x=0}\geq 0$ as $a\geq{1}/{3}$. Hence, $V_{M}(x=0)$ is the permeation velocity at $x=0$ from SCL to ML. Permeation velocity can also be achieved at other points $x=x_{0}$ on the SM interface, where $x_{0}$ is not equal to zero. $\sigma_{M}$, as per Eq. \mbox{(\ref{oth1})}, represents the difference in permeation velocity at $x=o$ and the lowest possible permeation velocity per unit gradient of pressure calculated at $x=0$ on the SM interface. Therefore, from Eqs. (\mbox{\ref{oth1}})-(\mbox{\ref{oth1-2}}) it follows
\begin{equation}\label{oth2}
\frac{V_{M}(0) - V_{M_0}}{\sigma_{M}}=\frac{\alpha^2}{4}\mu_r(1+\mu_r)\left(a-\frac{1}{3}\right),
\end{equation}
which says $V_{M}(0)$ is equal to $V_{M_0}$ when the value of $a$ is equal to $1/3$. This means that more than one-third of the SCL depth needs to be pinched up. Eventually, Table \mbox{\ref{table_2}} indicates that the maximum depth of SCL produced for injection varies depending on the type of skin lifting. Fig. \mbox{\ref{vmx_mur}} demonstrates a noticeable increase in the difference between $V_{M}(0)$  and $V_{M_0}$ as the value of $\mu_{r}$ increases for various pinching depths between $a=0.35$ and $a=0.4$. This suggests that the permeation velocity increases with the pinching depth and $\mu_{r}$. To achieve optimal permeation, increasing the skin pinching depth beyond $a=1/3$ is recommended. Similar analysis as above can be done at other locations of SM interface. Additionally, using highly viscous fluid may enhance permeation, although the negative effects of high viscosity must be considered. Further discussion on this topic can be found in the study later.\\
\begin{figure}[h!]
\centering
\includegraphics[width=0.5\linewidth]{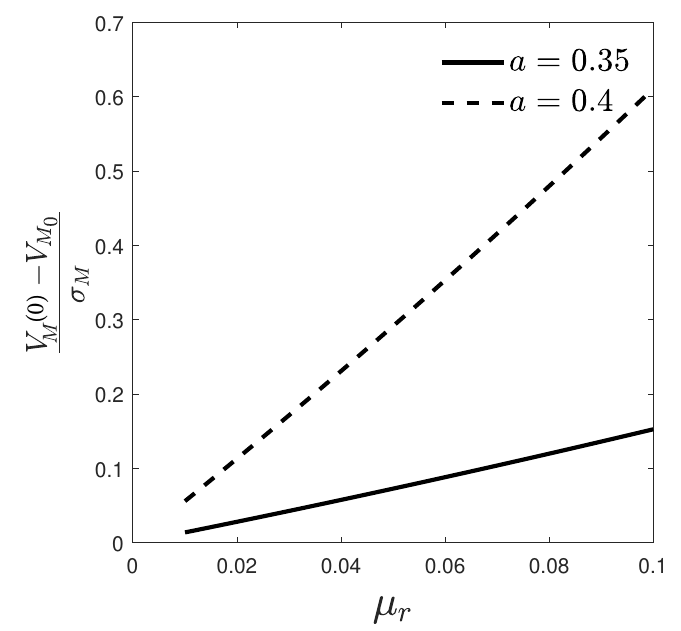}
\caption{Variation of $\left(V_{M}(0) - V_{M_0}\right)/\sigma_{M}$ with respect to $\mu_{r}$ when $a=0.35$ and $a=0.4$ corresponding to $Da = 3\times 10^{-3}$.}
\label{vmx_mur}
\end{figure}
\begin{table}[h!]
\centering
\begin{tabular}{p{85mm} p{42mm}}
 \hline
 Nature of skin lifting & Maximum depth of the SCL for injection \\
 \hline
 Higher pinching depth ($a=0.45$) & $R(0)=1.45$  \\
 Moderate pinching depth ($a=0.4$) & $R(0)=1.4$ \\
 Minimum required pinching depth ($a=0.34$) & $R(0)=1.34$ \\
 \hline
\end{tabular}
\caption{Various types of skin lifting and the corresponding height of the skin produced for Subcutaneous injection}
\label{table_2}
\end{table}
\noindent
In order to understand the flow pattern of injected fluid within the SCL, we compute the streamlines corresponding to the composite velocity. The corresponding flow patterns are recognized and explored using three important parameters $a$, $\mu_{r}$ and $\lambda$. The behavior of the flow patterns are discussed using axial composite velocity as a function of $y$ and computed shear stress at three positions of $y$: (i) line of injection $(y=y_{0})$ (ii) SM interface $(y=0)$ and (iii) SD interface $(y=R(x))$. In the upcoming sections, we are going to discuss this in detail.
\subsection{Flow pattern of the injected fluid in terms of Composite Streamlines}
\noindent
The injection fluid's flow patterns containing the drug are illustrated in Figs. \ref{contour_a3}-\ref{contour_a4} which display composite streamlines ($\Phi$) for two different values of $a=0.34$ and $a=0.4$ respectively. The remaining parameters assume the following values: $\text{Da}=3\times10^{-3}$, $\delta=0.3$, $\lambda = 2$, $\mu_r = 0.01$, $\lambda_{s} = 0.05$. The development of closed contours is observed from the line of injection in the form of primary eddy due to the evolution of high pressure. In addition, the streamlines follow the curvature of the SCL and generate additional closed contours as secondary eddy structures within the lifted portion of the SCL. The secondary eddy structure becomes more pronounced with higher pinching depth $a$. If we locate a particular contour, say $c=0.8$, formed near the line of injection, the corresponding size increases with $a$. This phenomenon is due to the constant transfer of kinetic energy (KE) from a large to a small eddy until the dissipation of KE. The formation of eddies helps better mixing of drug loaded injected fluid with the IF than pure molecular diffusion. In other words, a higher pinching depth causes better assimilation of the injected fluid with the IF.\\
\begin{figure}[h!]
\centering
\subfigure[]{\label{contour_a3}\includegraphics[width=0.38\textwidth]{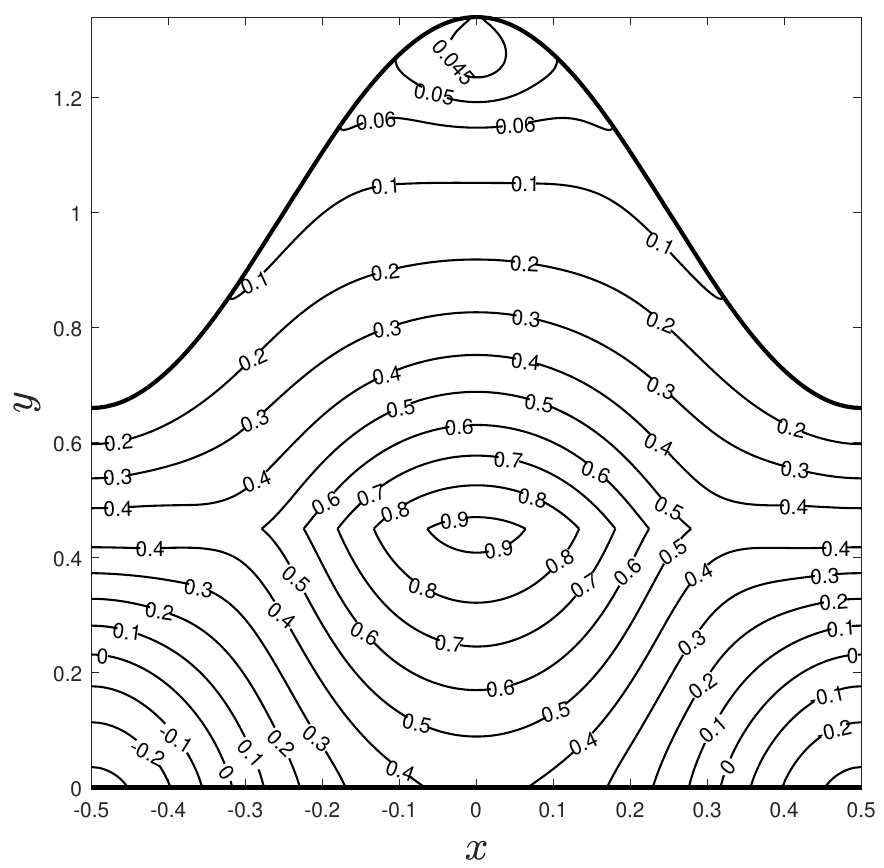}}
\subfigure[]{\label{contour_a4}\includegraphics[width=0.38\textwidth]{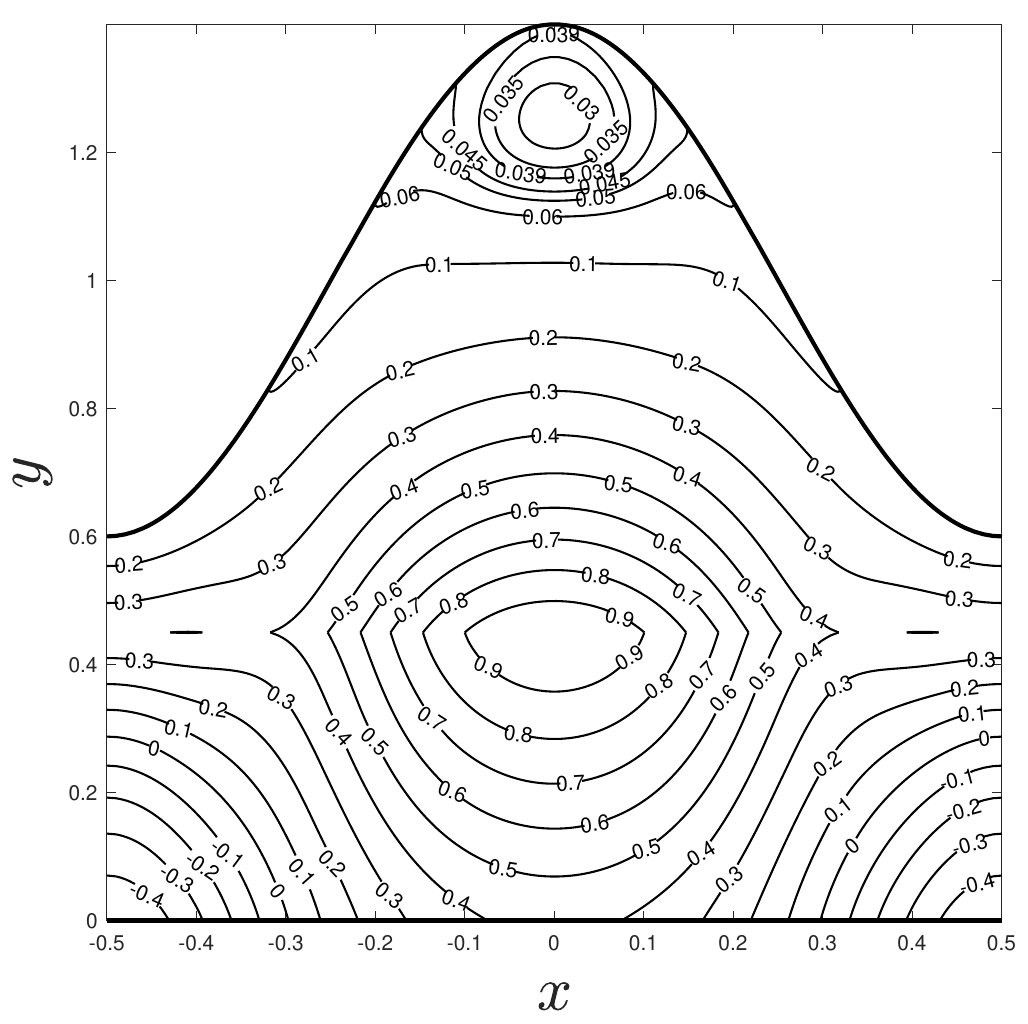}}
\caption{Composite streamlines within the subcutaneous layer for two different pinching depths (a) $a=0.34$, (b) $a=0.4$ with other parameters are $\lambda=2$, $Da=3\times10^{-3}$, $\delta=0.3$, $\mu_r = 0.01$ and $\lambda_s=0.05$, when $y_0=0.45$ is the line of injection.}
\end{figure}

\noindent
The variations in the flow pattern of the injected fluid inside SCL are explained in terms of $\Phi$ and depicted through Figs. \ref{contour_lambda_1}-\ref{contour_lambda_2} for a wide variety of tissue anisotropy ($\lambda$). Among all considered values of $\lambda$, Fig. \ref{contour_lambda_1} corresponds to the isotropic nature of the SCL, and the rest are plotted for anisotropic nature, in particular when the horizontal permeability $K_1$ is greater than the vertical permeability $K_2$. When $\textrm{Da}$ is fixed, an increase in $\lambda$ results in a reduction of the tissue anisotropy along the vertical direction. Consequently, the streamlines in the upstream follow the shape of the SD interface where the skin is pinched up until $\lambda\thickapprox1.5$ (see Fig. \ref{contour_lambda_1}). After that, the flow in the upstream starts to digress following the shape of the interface (see Fig. \ref{contour_lambda_1.5}). Corresponding to $\lambda=1.75$ and $2$ onset of the secondary eddy structure causing flow circulation can be seen at the position where the skin is lifted (see Figs. \ref{contour_lambda_1.75} and \ref{contour_lambda_2}). As discussed in the previous paragraph, the secondary eddy structure aftermaths good mixing of injected fluid with the IF. Hence, the movement of the injected drug within the SC tissue region becomes vigorous in case of larger tissue anisotropy and high lifting of the skin.\\
\begin{figure}[h!]
\centering
\subfigure[]{\label{contour_lambda_1}\includegraphics[width=0.38\textwidth]{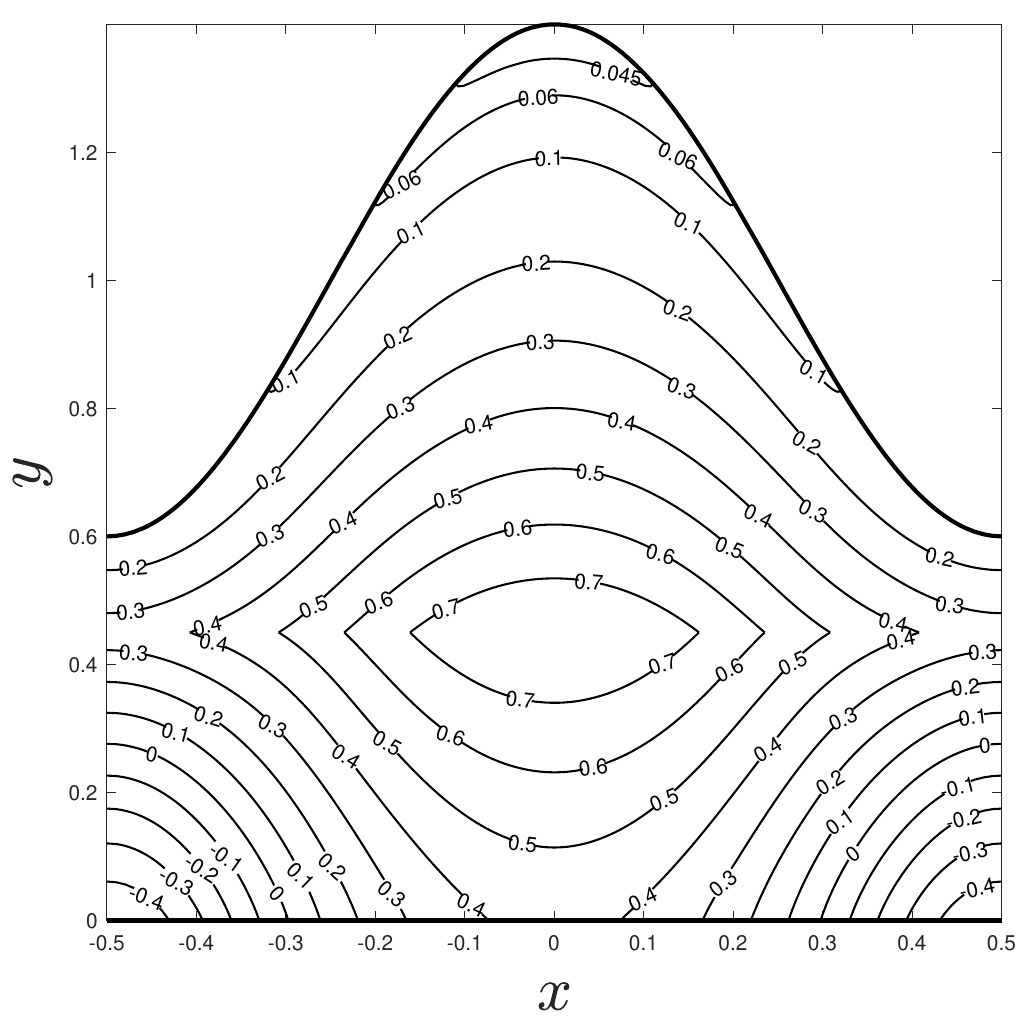}}
\subfigure[]{\label{contour_lambda_1.5}\includegraphics[width=0.38\textwidth]{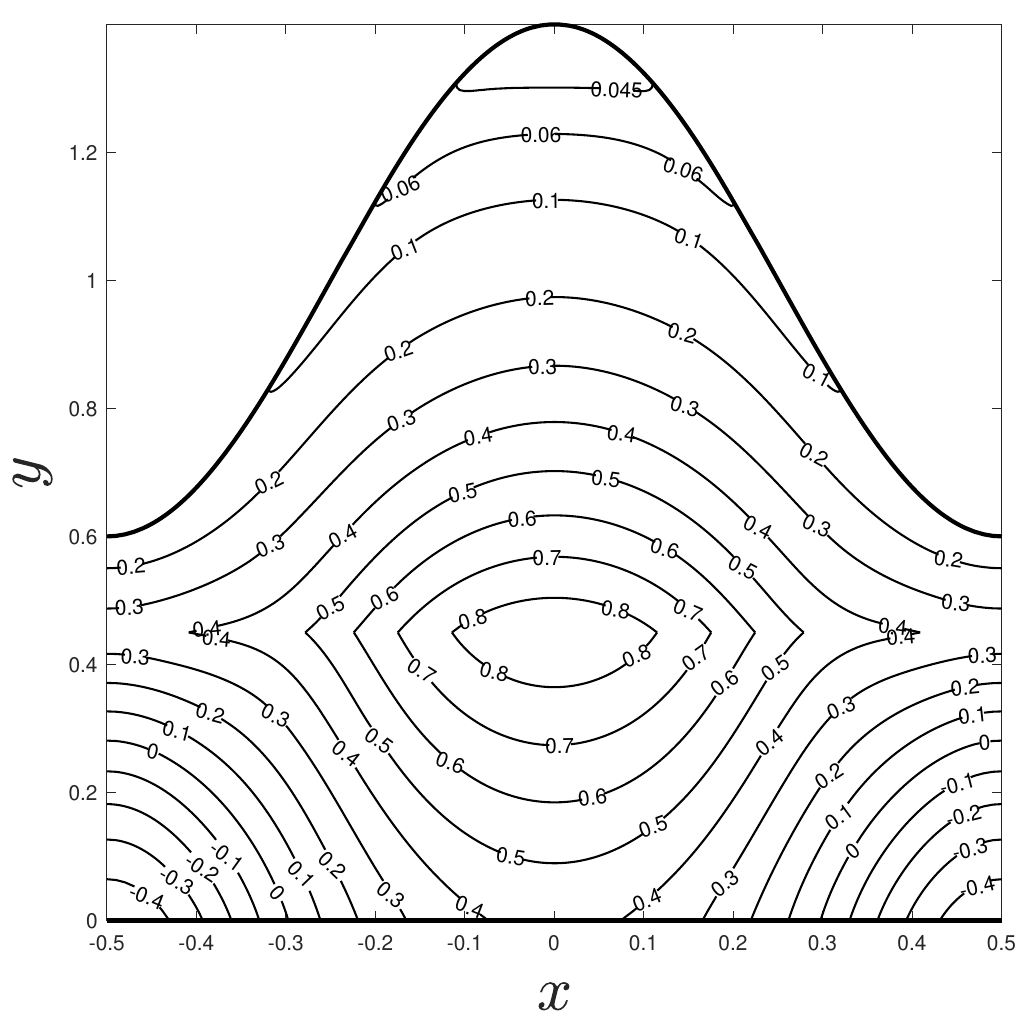}}
\subfigure[]{\label{contour_lambda_1.75}\includegraphics[width=0.38\textwidth]{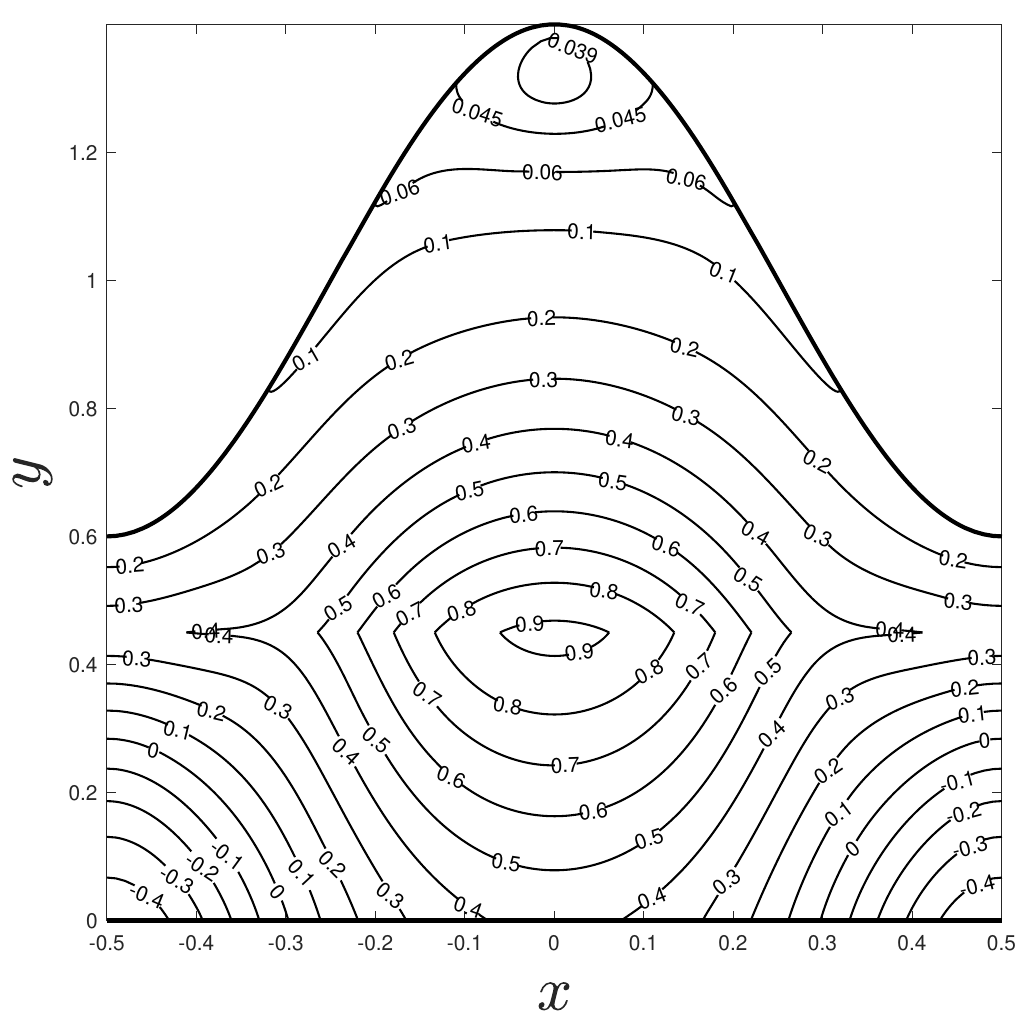}}
\subfigure[]{\label{contour_lambda_2}\includegraphics[width=0.38\textwidth]{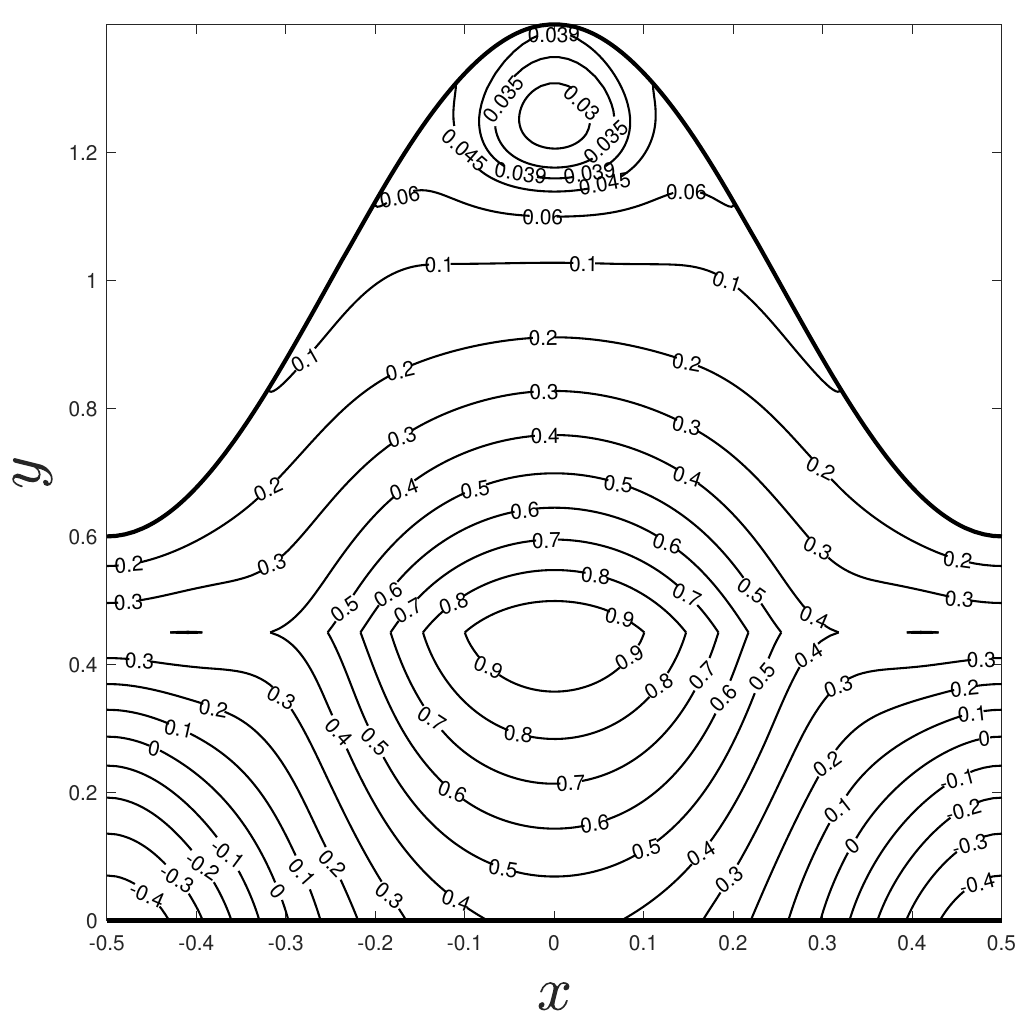}}
\caption{Composite streamlines within subcutaneous layer for different tissue anisotropy ratio (a) $\lambda=1$, (b) $\lambda=1.5$, (c) $\lambda=1.75$ and (d) $\lambda=2$ with $a=0.4$, $Da=3\times10^{-3}$, $\delta=0.3$, $\mu_r = 0.01$ and $\lambda_s=0.05$, when $y_0=0.45$ is the line of injection.}
\end{figure}

\noindent
In general, the IF is less viscous than that of the AC population. Consequently, $\mu_r$ is less than 1 (i.e., $\mu_f<\mu_c$). So, the AC phase experiences less drag from the IF side. In other words, the AC can impose high interstitial resistance towards IF movement during fluid injection. Figs. \ref{contour_mur_0.1}-\ref{contour_mur_0.01} illustrate the flow pattern of the injected fluid for $\mu_r=0.1$, $\mu_r=0.05$ and $\mu_r=0.01$. Only primary eddies are developed near the SD interface for $\mu_r=0.1$. But corresponding to the reduced $\mu_r$, a significant viscosity difference between AC and IF is developed. The development of secondary eddy can be observed at the lifted portion for $\mu_r=0.05$, which becomes prominent with a further decrement of $\mu_r$. We can locate a contour $c=0.7$ when the viscosity ratio is as low as $\mu_r=0.1$ close to the line of injection, which subsequently increases in size for smaller $\mu_r$ (Fig. \ref{contour_mur_0.05}). Hence $\lambda \geq 1.75$, $a\geq0.4$, and $\mu_r\leq0.01$ are the conditions to be satisfied simultaneously to support eddy structure within the lifted portion of the skin.
\subsection{Flow pattern of the injected fluid in terms of Composite Velocity}
\noindent
In order to justify the behavior of the composite streamlines, we go through the axial composite velocity ($u$) profiles as shown in Figs. \ref{velocity_a}-\ref{velocity_mur} for the above three parameters $a$, $\lambda$ and $\mu_{r}$. We identify three intervals for $y$ such as (i) $0 < y \leq 0.4$ (ii) $0.4 < y \leq 1$ (iii) $1\leq y \leq 1.4$ in which $u$ shows varied behavior due to the assumption of constant volumetric flux condition. This phenomenon justifies the dissipation of a particular contour within the primary eddy after getting larger (with an increase in $a$) and simultaneously creating new smaller contours within the secondary eddy. Consequently, an increased magnitude of the axial composite velocity with $a$ is noted for ranges of $y$ in (i) and (iii), indicating the role of enhanced axial convective transport corresponding to deeper skin pinching (see Fig. \ref{velocity_a}). On the other hand, within the range $0.4 < y \leq 1$, $u$ shows change in sign for both $a=0.34$ and $a=0.4$ indicating the formation of the secondary eddy.\\

\begin{figure}[h!]
\centering
\subfigure[]{\label{contour_mur_0.1}\includegraphics[width=0.38\textwidth]{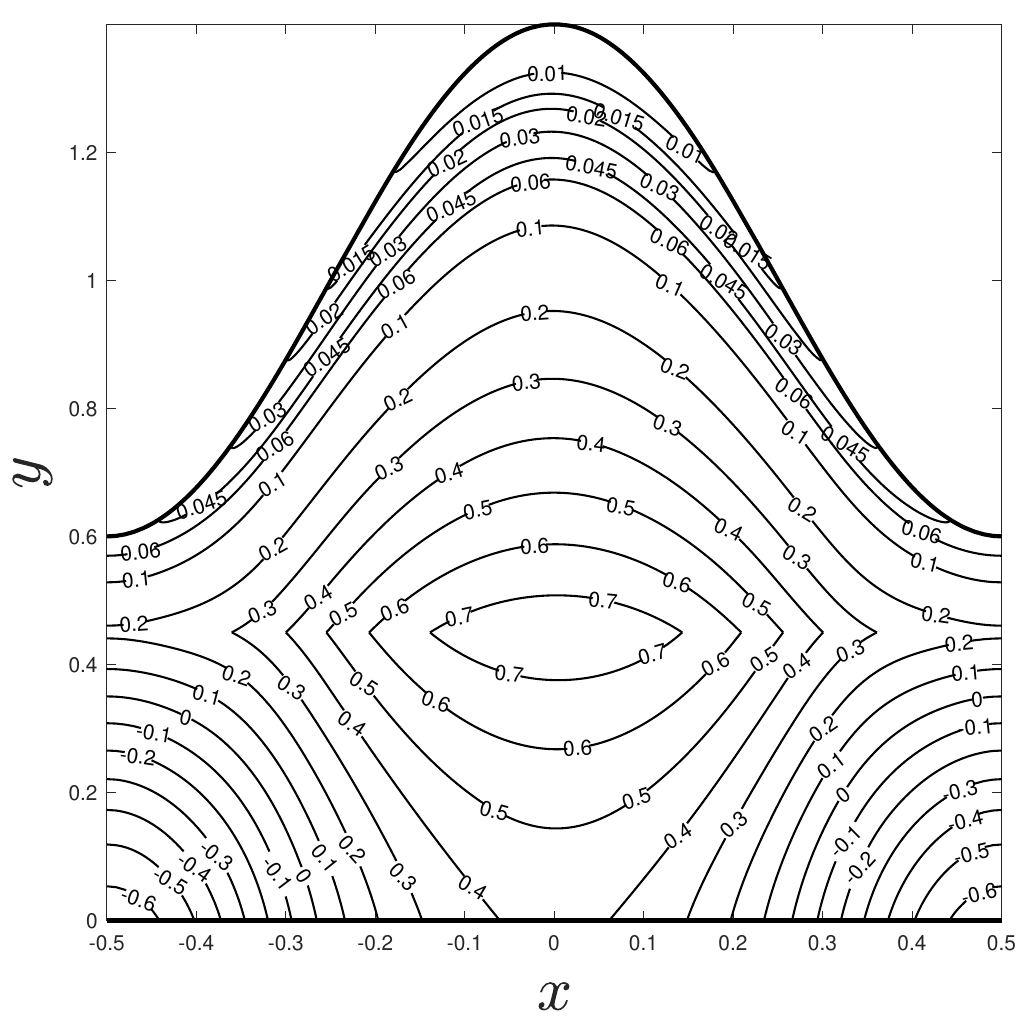}}
\subfigure[]{\label{contour_mur_0.05}\includegraphics[width=0.38\textwidth]{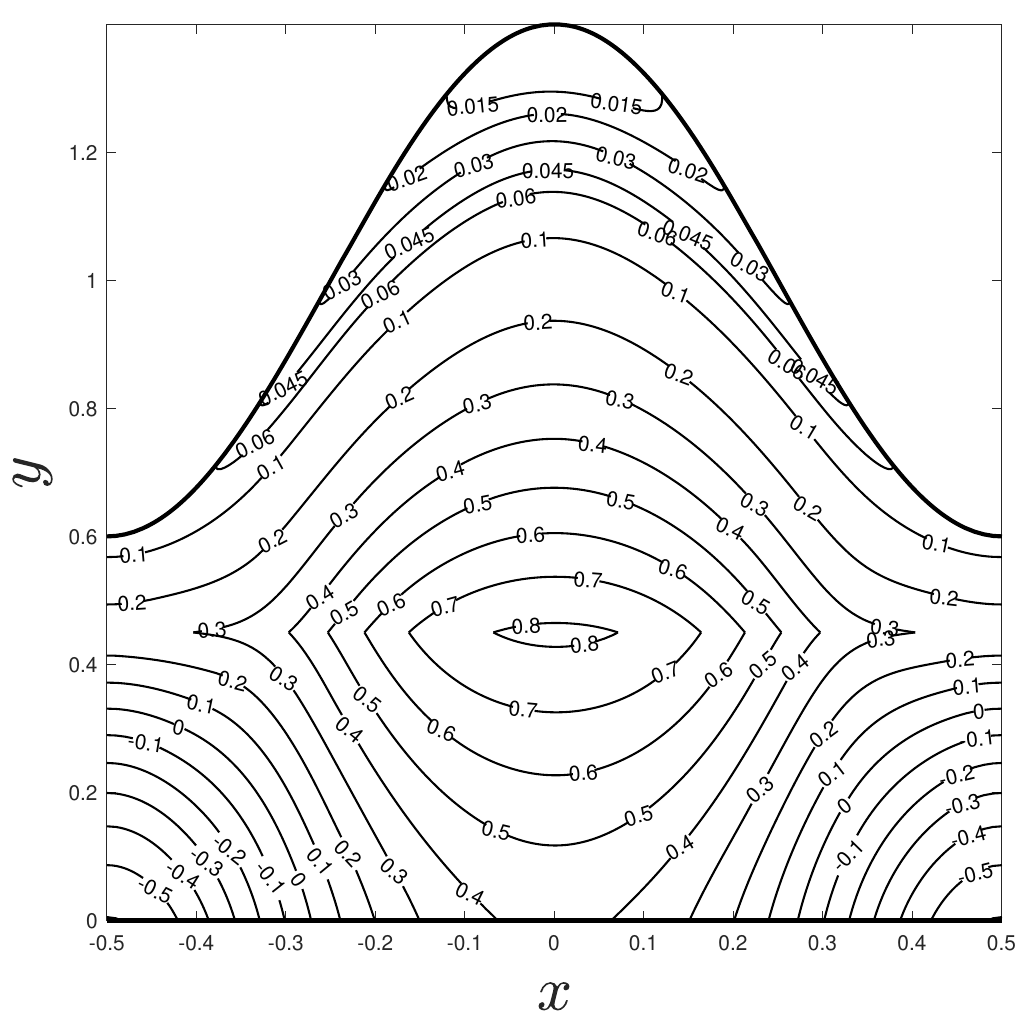}}
\subfigure[]{\label{contour_mur_0.01}\includegraphics[width=0.38\textwidth]{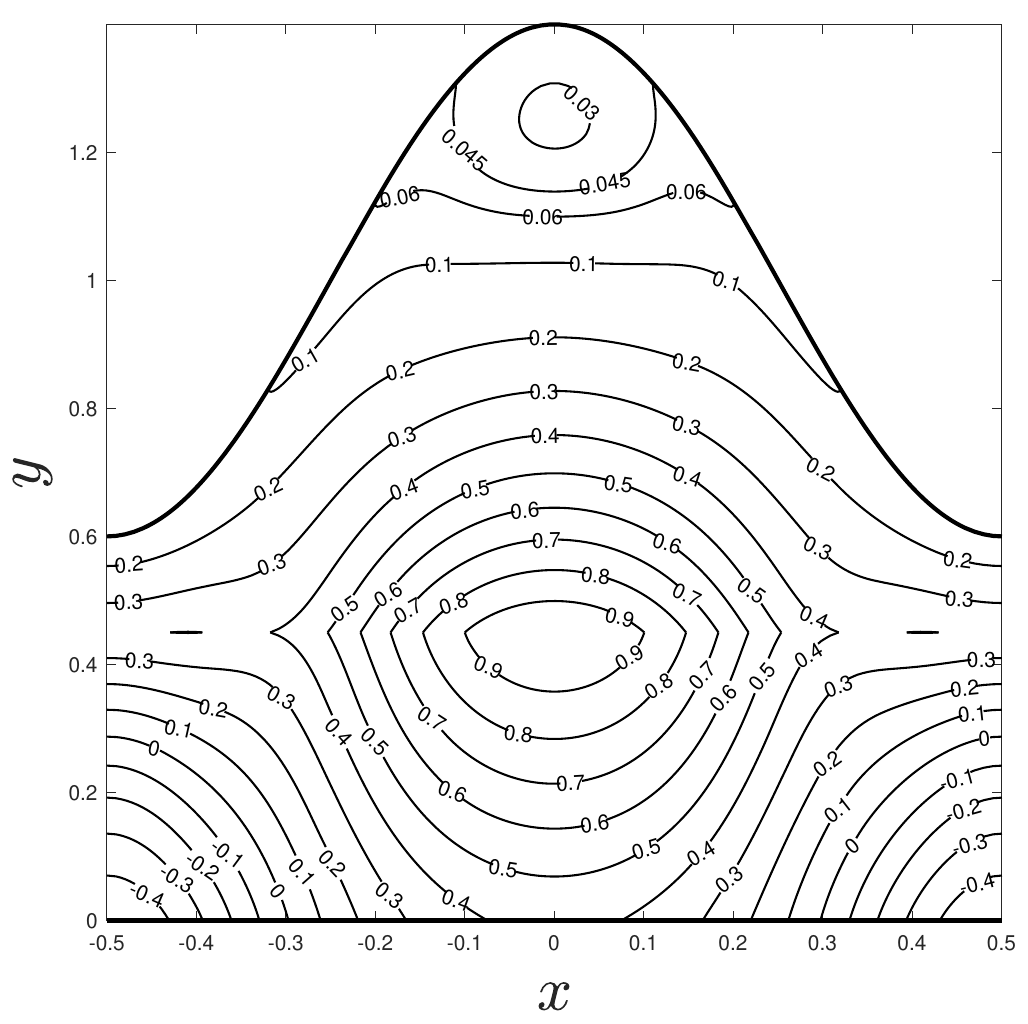}}
\caption{Patterns of composite streamlines within subcutaneous layer for different viscosity ratio (a) $\mu_r=0.1$, (b) $\mu_r=0.05$ and (c) $\mu_r=0.01$ with the other parameters are $\lambda=2$, $Da=3\times10^{-3}$, $\delta=0.3$, $a = 0.4$ and $\lambda_s=0.05$, when $y=y_0(=0.45)$ is the line of injection.}
\end{figure}
\noindent
The influence of tissue anisotropy on generating secondary eddy can be discussed using the axial composite velocity $u$. Fig. \ref{velocity_lambda} represents profiles of $u$ versus $y$ at $x=0$ for various $\lambda$ when $a=0.4$, $\mu_r = 0.01$, $\lambda_{s} = 0.05$, $\text{Da}=3\times10^{-3}$, and $\delta=0.3$. At the SD interface, $u$ becomes zero due to the no-slip condition, while the IF and AC exhibit horizontal motion at the SM interface due to the slip velocity associated with the squeezing effect under high pressure developed due to SC injection. Except in the interval $0.4\leq y \leq 1$, $u$ increases with $\lambda$ while an opposite behavior is noted within the stated region. Such contrasting behavior of $u$ is due to the fixed volumetric flow rate across the SC region. Moreover, $u$ does not change its sign for $\lambda=1, 1.5$. Still, it changes from positive to negative for $\lambda=1.75, 2$ indicating the development of a secondary eddy structure near the SD interface where the skin is lifted.\\

\noindent
Finally, Fig. \ref{velocity_mur} shows that the composite velocity near the SD and SM interface is decreased with an increasing magnitude of $\mu_r$. However, the opposite phenomenon is noticed near the line of injection. Also, axial composite velocity changes its sign for $\mu_r=0.01$ while it does not change for $\mu_r=0.05$ and $\mu_r=0.1$ which becomes the root cause of prominent secondary eddy structure corresponding to $\mu_r=0.01$. The development of two eddies may be discussed with the help of pressure gradient and shear stress at the three positions of $ y$, i.e., $y=0$, $y=y_0$, and $y=R(x)$.
\begin{figure}[h!]
\centering
\subfigure[]{\label{velocity_a}\includegraphics[width=0.38\textwidth]{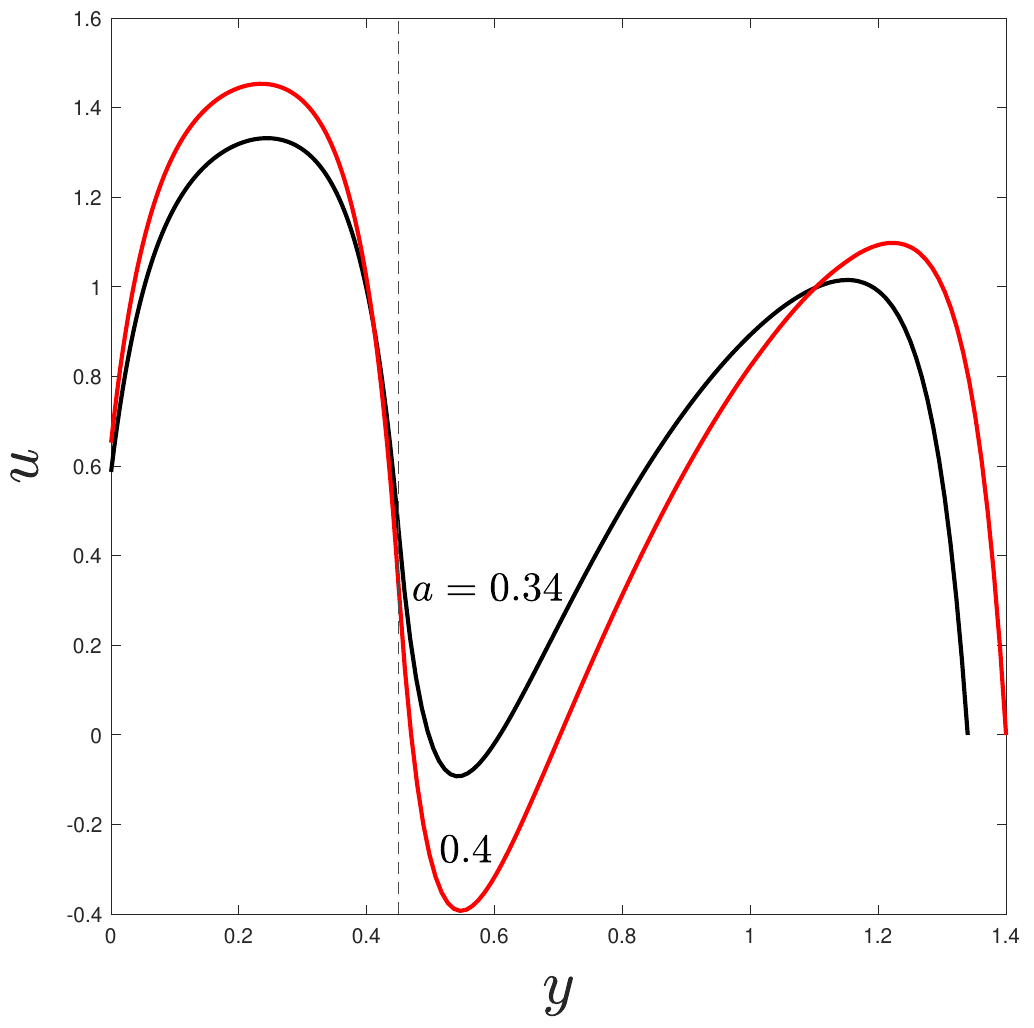}}
\subfigure[]{\label{velocity_lambda}\includegraphics[width=0.38\textwidth]{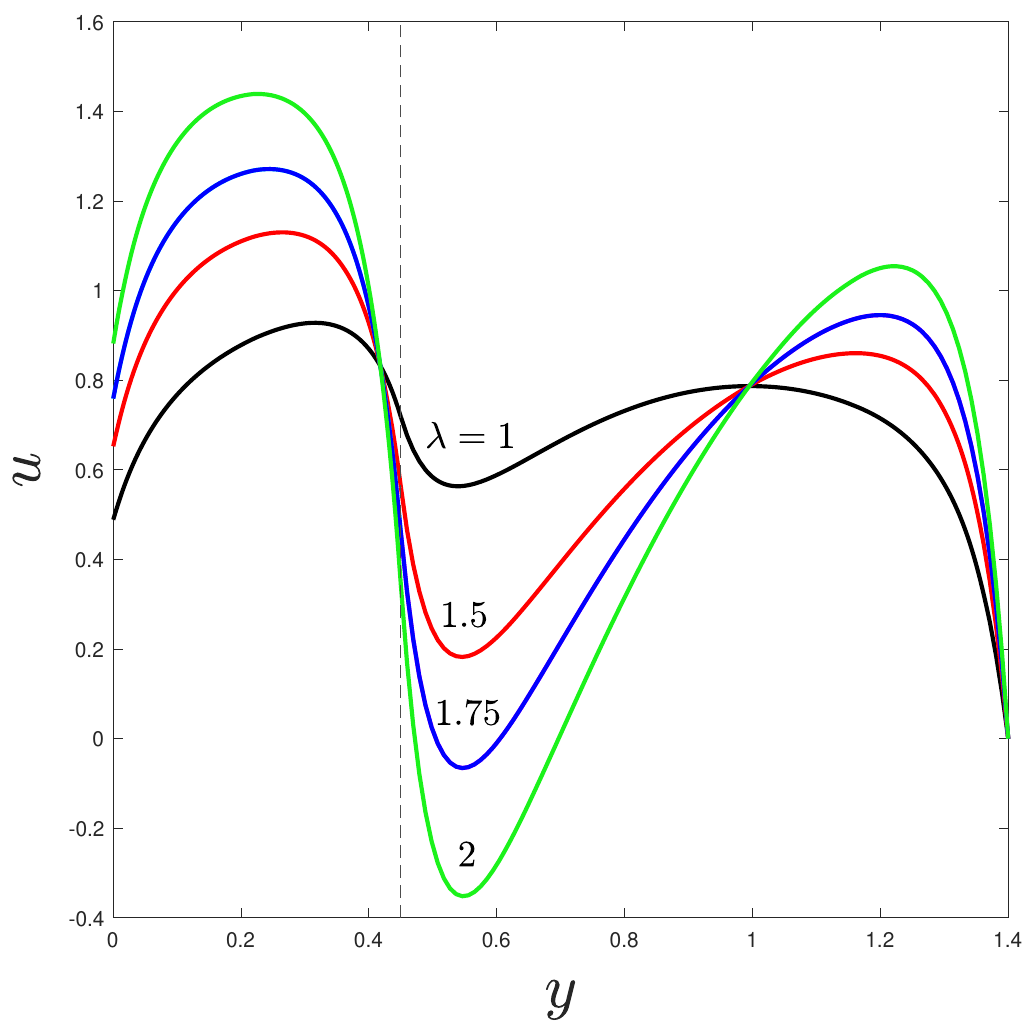}}
\subfigure[]{\label{velocity_mur}\includegraphics[width=0.38\textwidth]{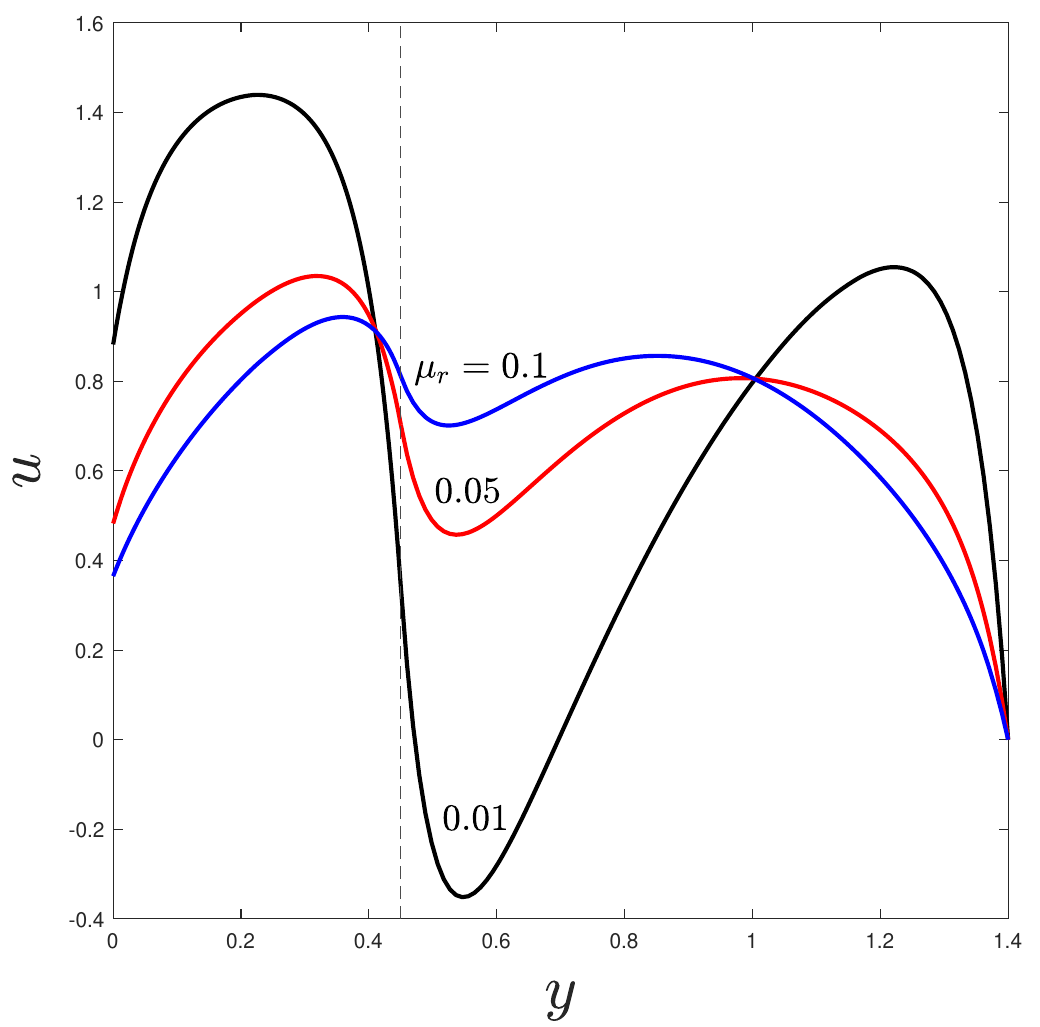}}
\caption{Axial composite velocity versus $y$ for (a) $a=0.34$, $0.4$,(b) $\lambda=1,1.5,1.75,2$ and (c) $\mu_r=0.01,0.05,0.1$ at $x=0$.}
\end{figure}
\subsection{Effect of pressure gradient and shear stress on pain realization}
\noindent
During and post-injection, a patient may realize swelling and pain from the injection site. According to some researchers, both the pressure gradient and shear stress act as an indicator of the evolvement of physical pain \citep{mueller2005pressure,goossens2009fundamentals}. Consequently, it is necessary to explore the nature of the pressure gradient $\left(\partial p / \partial x\right)$ and shear stress mainly at three different positions $y=y_{0}$ (line of injection), $y=0$ (SM interface) and $y=R(x)$ (SD interface). Figs. \ref{pressure_various_a}-\ref{pressure_various_mur} illustrate that the pressure gradient within $[-0.5,0.5]$ becomes symmetric about the line $x=0$. Consequently, one can pay attention to $\partial p / \partial x$ within $[0,0.5]$, which shows both monotonic and non-monotonic nature depending on the parameters. The non-monotonic nature of $\partial p / \partial x$ indicates the development of an adverse pressure gradient that causes eddy structure formation due to the flow separation. The adverse pressure gradient is developed mainly due to the exchange of kinetic energy between AC and IF due to significant IF viscosity variation caused after fluid injection. This adverse pressure gradient leads to the creation of secondary eddy within the lifted portion of the SCL. We have discussed earlier that these eddies are helpful for better blending of injected fluid-containing drugs within IF. The pressure gradient becomes maximum at the line of injection. The next possible location having a low magnitude of the pressure gradient is the SM interface. Therefore, the SD interface experiences the lowest gradient of pressure.\\
\begin{figure}[h!]
\centering
\subfigure[]{\label{pressure_various_a}\includegraphics[width=0.49\textwidth]{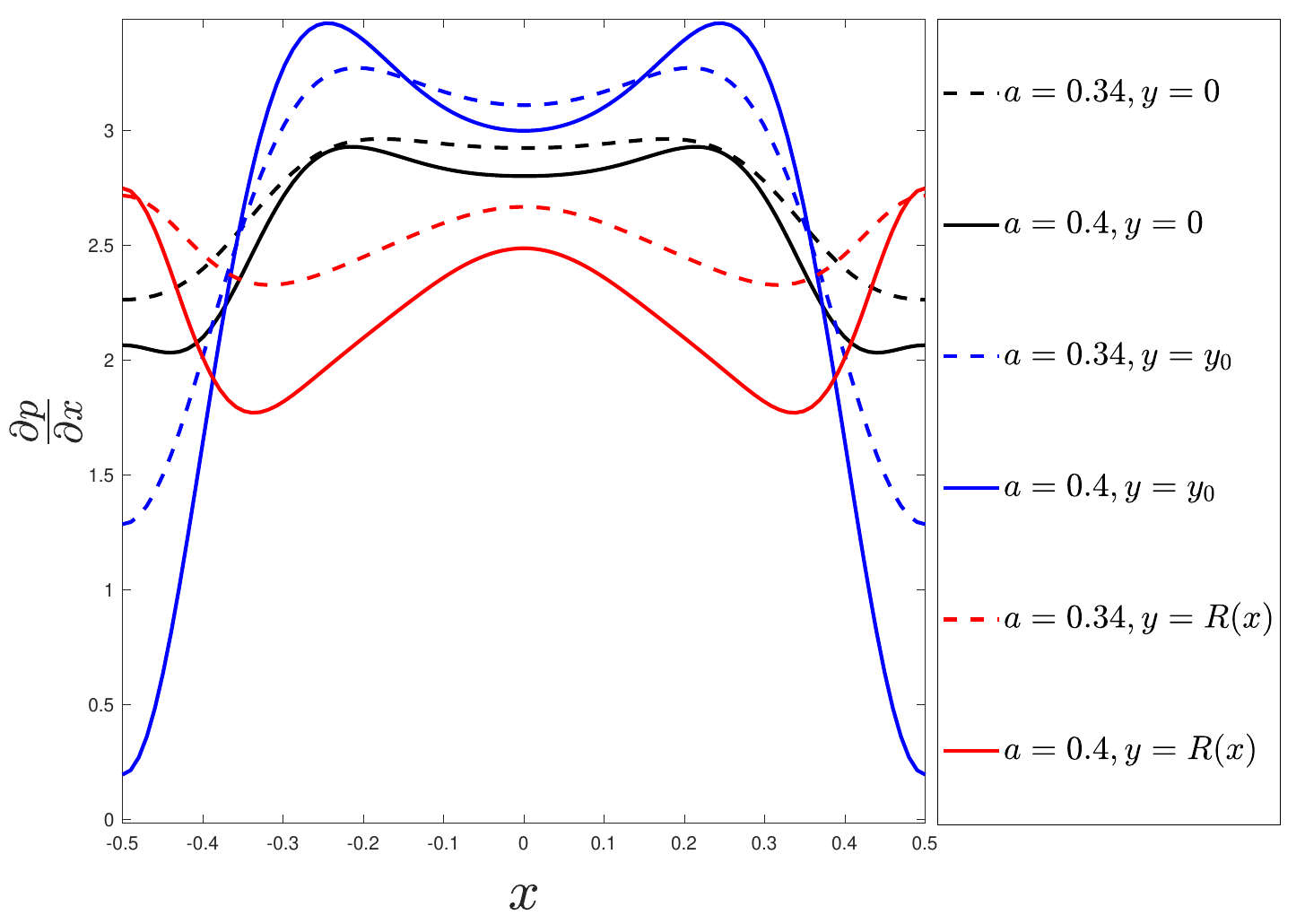}}
\subfigure[]{\label{pressure_various_lambda_best}\includegraphics[width=0.49\textwidth]{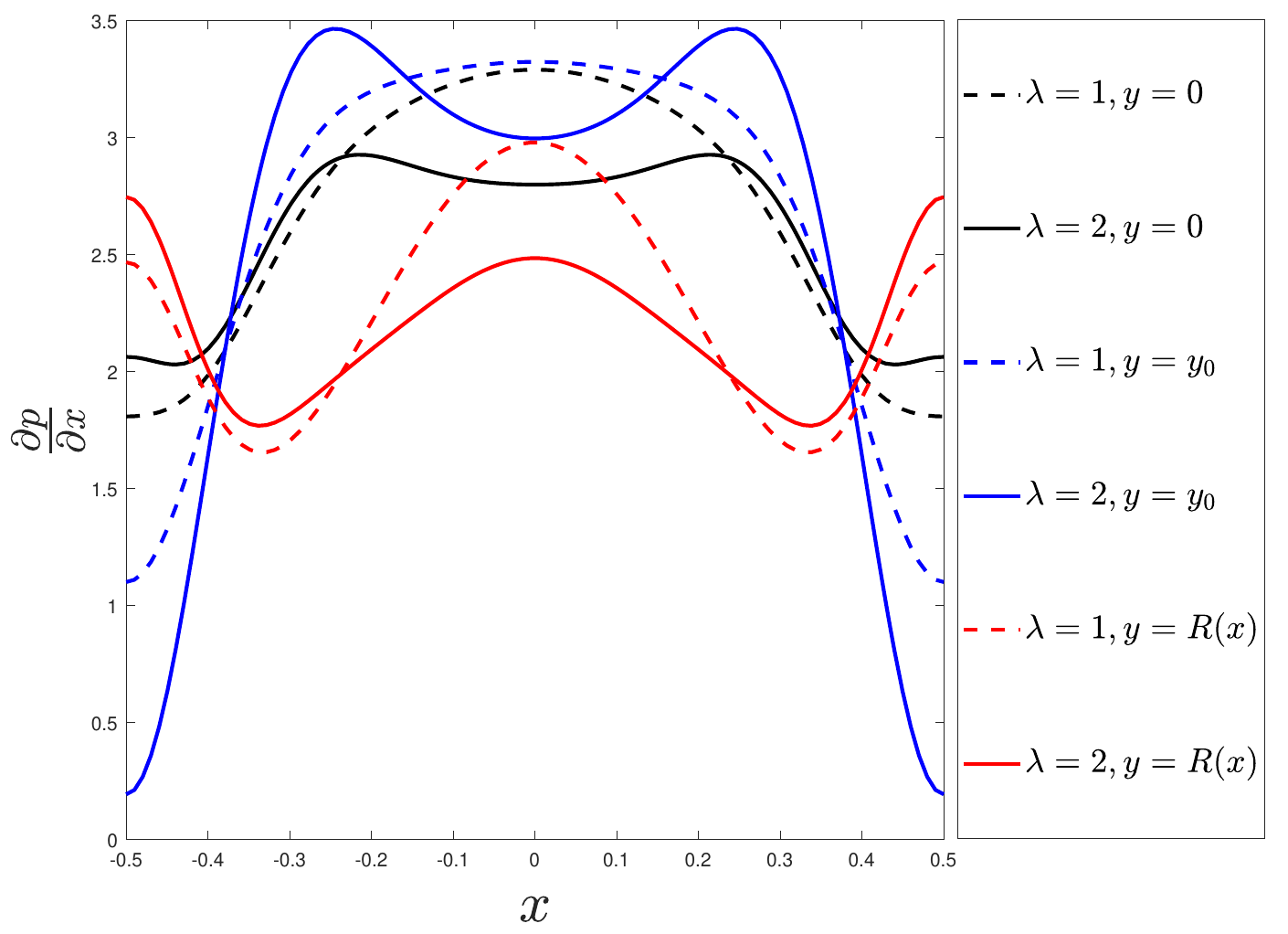}}
\subfigure[]{\label{pressure_various_mur}\includegraphics[width=0.49\textwidth]{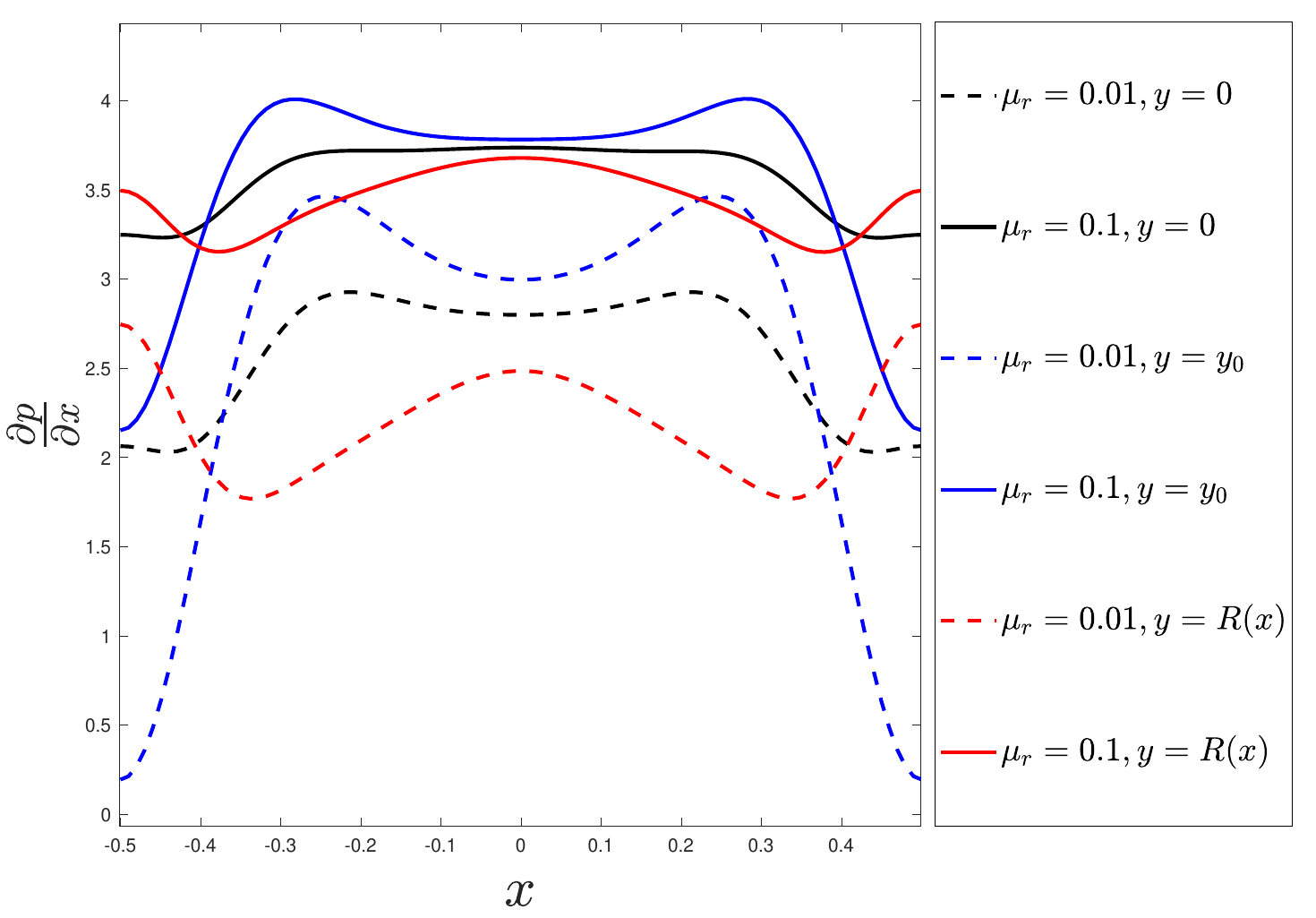}}
\caption{The pressure gradient variation with $x$, ($x\in[-0.5,0.5]$) corresponding to (a) $a=0.34, 0.4$ (b) $\lambda=1, 2$ (c) $\mu_r=0.01, 0.1$ at three locations $y=0$ (SM interface), $y=y_{0}$ (line of injection), and $y=R(x)$ (SD interface).}
\end{figure}

\begin{figure}[h!]
\centering
\subfigure[]{\label{shear_various_a}\includegraphics[width=0.48\textwidth]{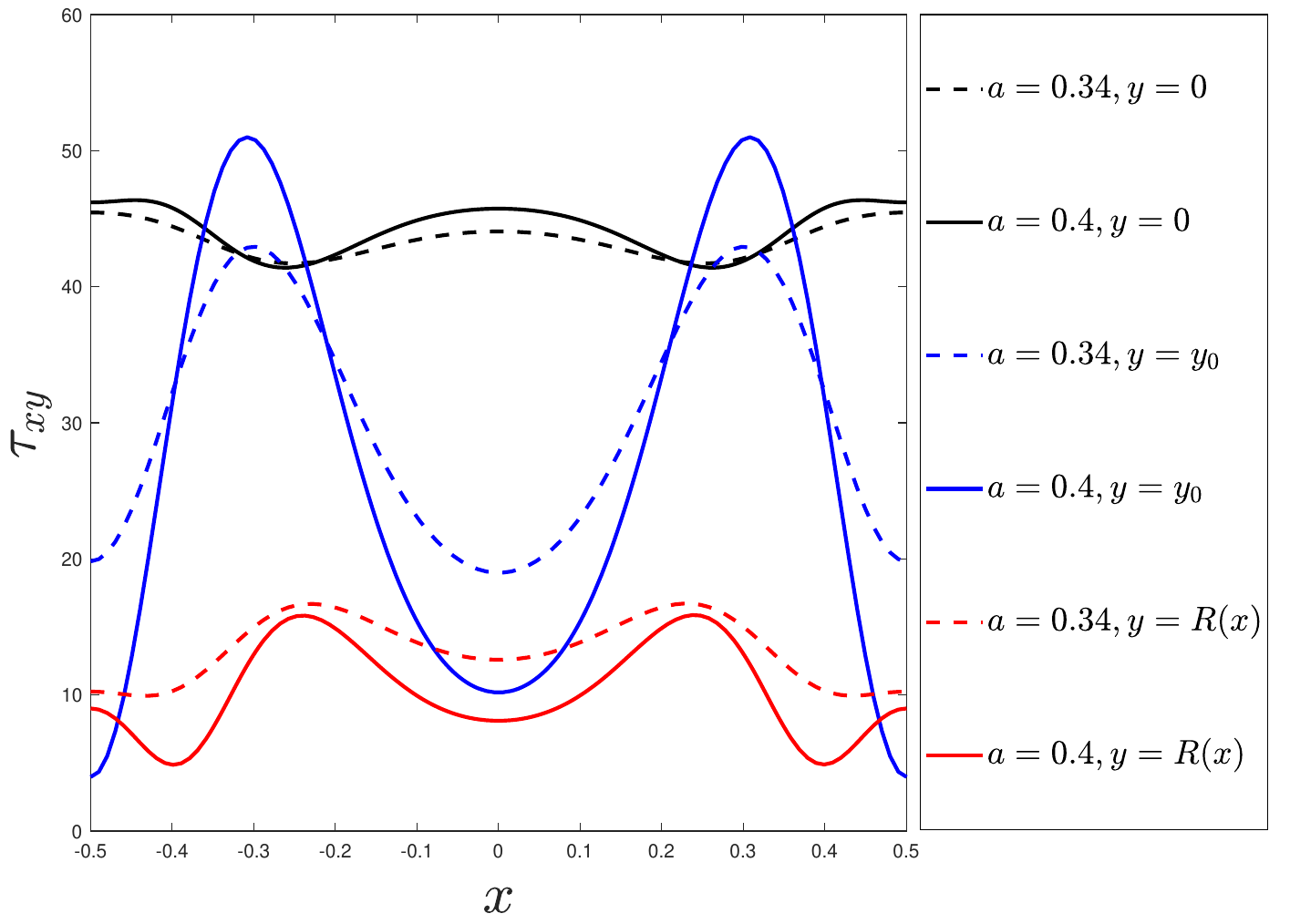}}
\subfigure[]{\label{shear_various_lambda_best}\includegraphics[width=0.48\textwidth]{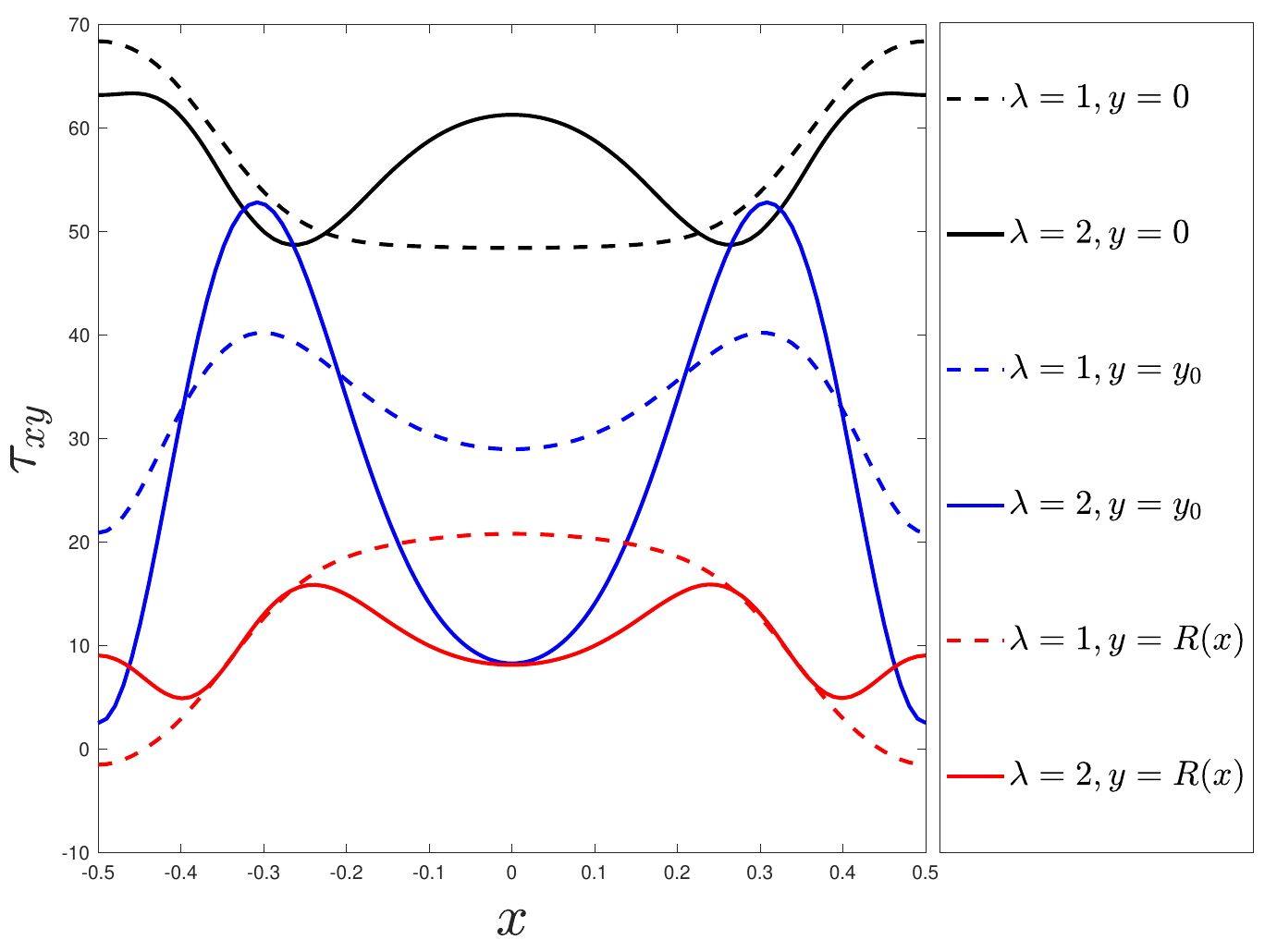}}
\subfigure[]{\label{shear_various_mur}\includegraphics[width=0.48\textwidth]{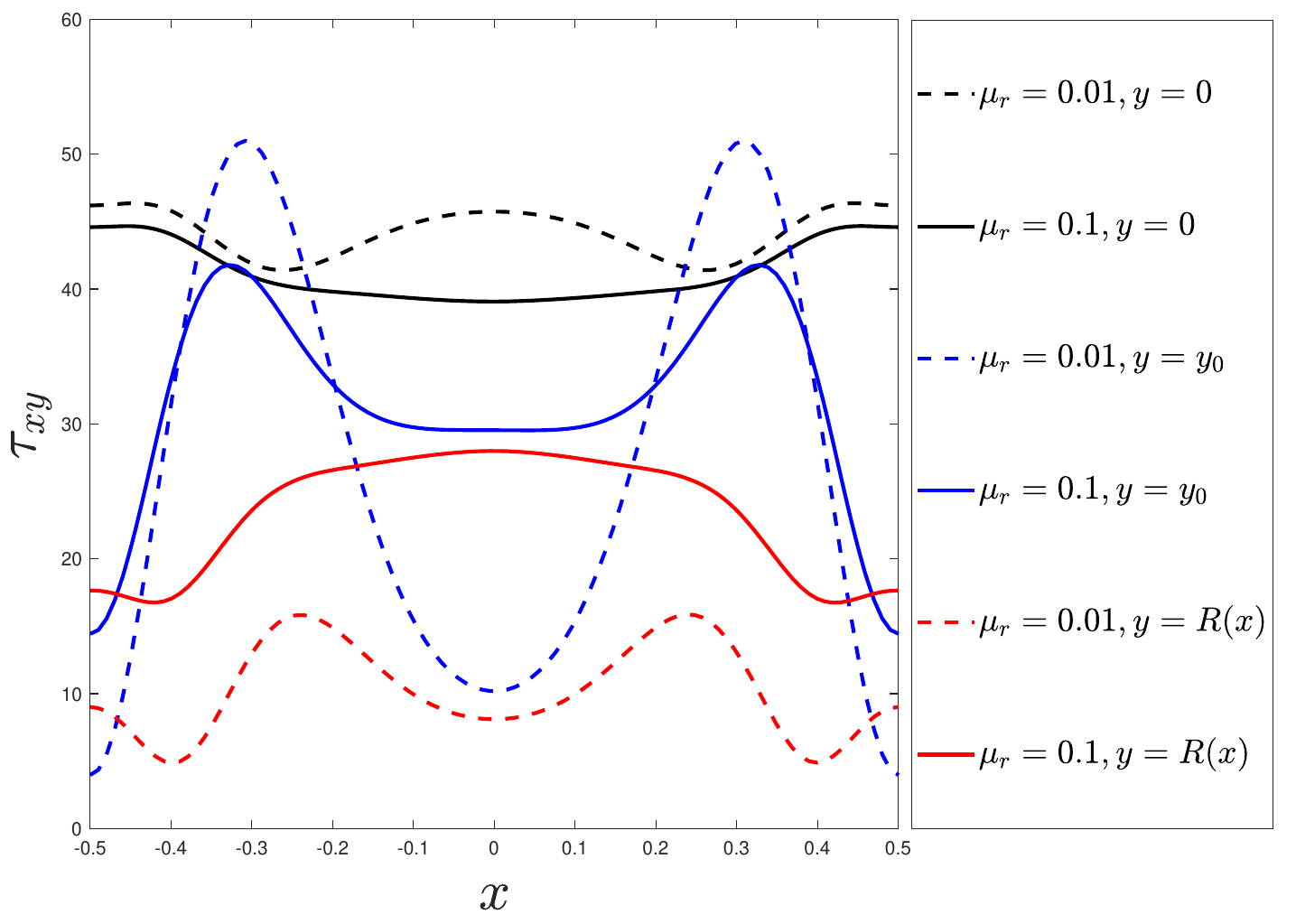}}
\caption{Shear stress profiles for (a) $a=0.34, 0.4$ (b) $\lambda=1, 2$ (c) $\mu_r=0.01, 0.1$ at three locations $y=0$ (SM interface), $y=y_{0}$ (line of injection), and $y=R(x)$ (SD interface).}
\end{figure}

\noindent
On the other hand, the calculated shear stress as given by
\begin{equation}\label{SEq11}
\tau_{xy} = \frac{\partial u}{\partial y} + \delta^2 \frac{\partial v}{\partial x},
\end{equation}
behaves non-monotonic for $x\in[-0.5, 0.5]$ at the three distinct positions $y=y_{0}$ (line of injection), $y=0$ (SM interface) and $y=R(x)$ (SD interface) (see Figs \ref{shear_various_a}-\ref{shear_various_mur}). Similar to the pressure gradient, as shown in the Figs \ref{shear_various_a}-\ref{shear_various_mur} the shear stress is also symmetric about $x=0$ within $[-0.5, 0.5]$. The non-monotonic behavior of shear stress tends to display large fluctuation for $x\in[-0.5,0.5]$ at $y=y_{0}$. But, such a high fluctuation never occurs for the other two positions of $y$. However, the SM interface experiences larger average shear stress than both the line of injection and SD interface. In addition, between SD and SM interfaces, the shear stress behaves exactly opposite each other. The behavioral difference of the shear stress fields between the SD interface and the line of injection allows the generation of secondary eddy from the primary one. On the other hand, the high shear stress close to the SM interface does not allow eddy formation. Hence, the formation of secondary eddy takes place near the SD interface only. Eventually, it can be said that the larger pressure gradient at $y=y_{0}$ and larger average shear stress at $y=0$ help the injected fluid to move away from the line of injection and consequently lateral spreading of injected fluid at the SM interface due to the consideration of slip property.\\
\begin{figure}[h!]
\centering
\includegraphics[width=0.5\linewidth]{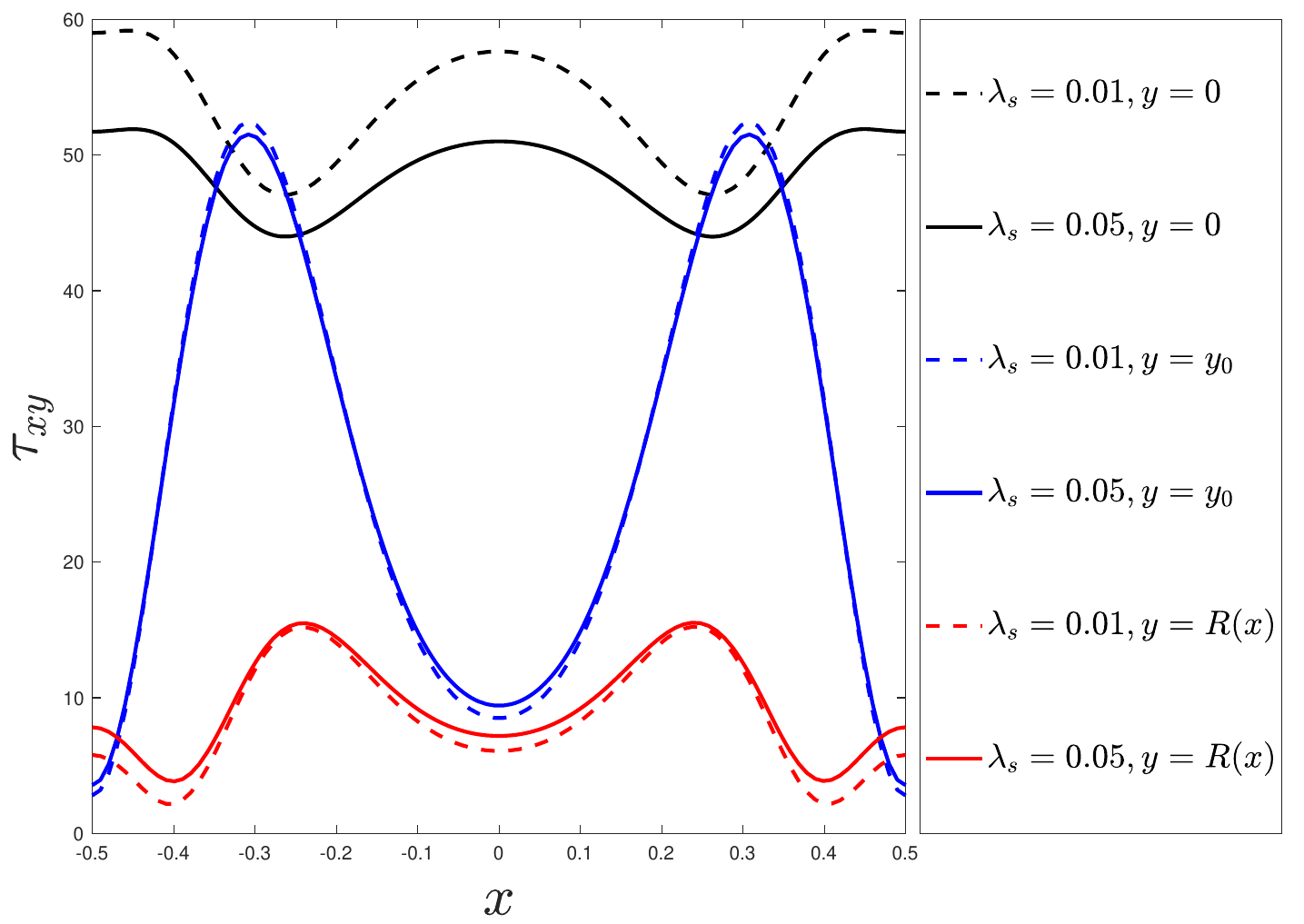}
\caption{Shear stress variation with $x\in[-0.5,0.5]$ for $\lambda_s=0.01, 0.05$ and at three different locations (i) SM interface ($y=0$) (ii) line of injection ($y=y_{0}$) (iii) SD interface ($y=R(x)$).}
\label{shear_various_slip}
\end{figure}

\noindent
The fluctuation in the non-monotonic profile of both the pressure gradient and shear stress profile corresponding to $y=y_{0}$ becomes pronounced with an increase in the depth of SCL pinched up. But at the other two positions, such high fluctuation cannot be seen. Moreover, it is noticed that pressure gradient and shear stress reduce with the increase in the depth of SCL pinched up during injection. However, it is rather difficult to predict anything from Figs. \ref{pressure_various_a} and \ref{shear_various_a} about the variations of $a$ on both the pressure gradient and shear stress at $y=0$ and $y=y_{0}$. On the other hand, the impact of $a$ on both the pressure gradient and shear stress can be easily understood at $y=R(x)$ (SD interface). Therefore at this moment, we may focus on the SD interface only. Now, suppose we associate the experience of the intensity of pain with the variation of pressure gradient and shear stress. In that case, we find that a patient may realize less pain from the SD interface for a higher depth of skin pinching during the injection.\\

\begin{figure}[h!]
\centering
\includegraphics[width=0.5\linewidth]{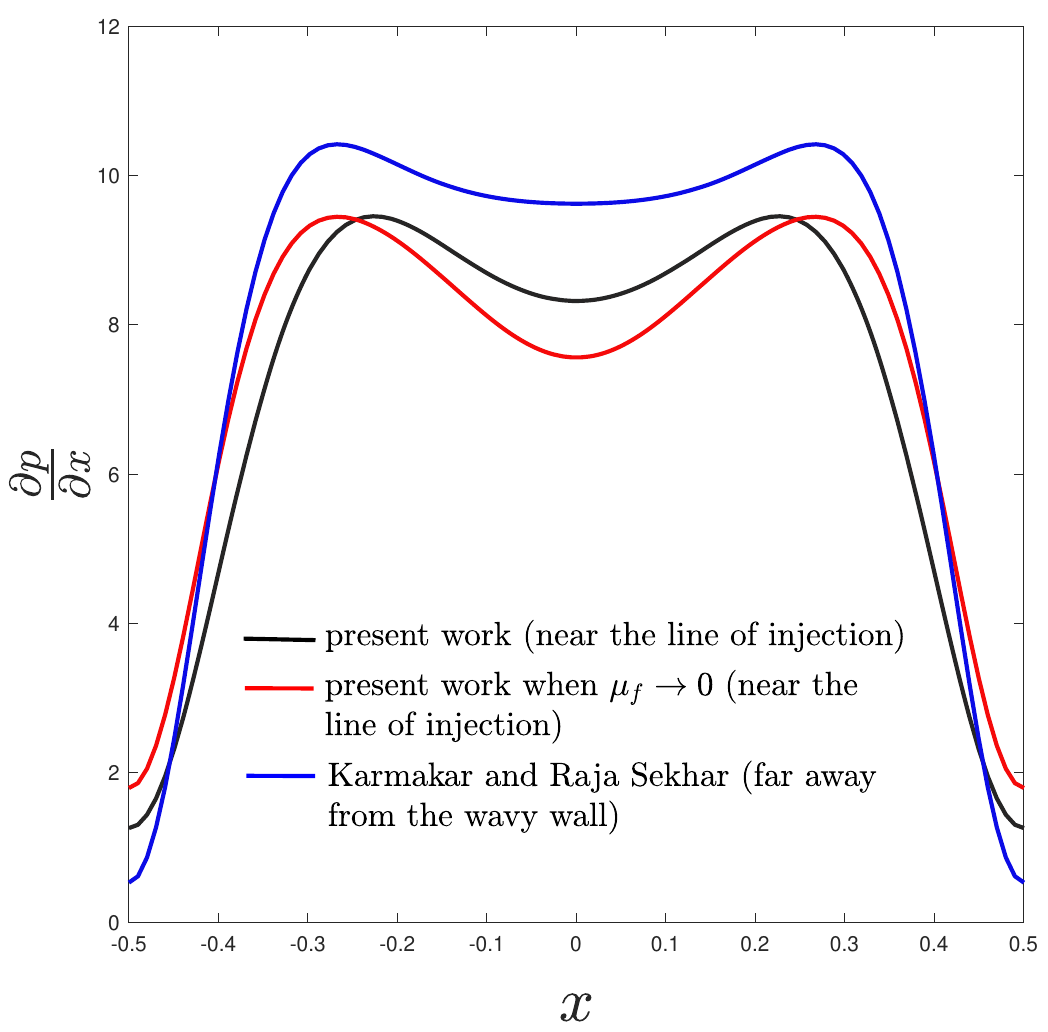}
\caption{Comparison of pressure gradients obtained (i) from the present study (black line) near the line $y=y_0$ (line of injection), (ii) in the limiting case $\mu_f\rightarrow0$ (red line) close to the line of injection (see Eq.(\ref{HEq8})) and (iii) from the study of \citet{karmakar2017note} which discusses the hydrodynamics of fluid flow through a wavy anisotropic porous channel (blue line).}
\label{compare}
\end{figure}
\noindent
The tissue anisotropy destroys the monotonic nature of the profiles of both the shear stress and pressure gradient as depicted through the Figs. \ref{pressure_various_lambda_best} and \ref{shear_various_lambda_best}. In other words, higher $\lambda$ (more precisely for all possible $\lambda>1$) imparts fluctuation in the pressure gradient and shear stress profiles. In particular for $\lambda=2$, the profiles of $\partial p/ \partial x$ at $y=y_{0}$ and $y=R(x)$ behave exactly opposite to each other. This opposite behavior of $\partial p/ \partial x$ is responsible for creation of secondary eddy from the primary one. In addition, we notice that at $y=0$ and $y=y_{0}$, $\partial p/ \partial x$ maintains monotonic nature for $\lambda=1$ (isotropic). This monotonic nature changes to non-monotonic with increased $\lambda$ beyond $1$ (adverse pressure gradient is developed). At the three positions, average magnitude of $\partial p/ \partial x$ reduces with increase in anisotropy. Hence, one would expect less pain generation from those sites due to increased anisotropy. On the other hand, the development of larger shear stress at the SM interface with increased $\lambda$ may consequence a patient to realize more pain generated from the SM interface though $\partial p/ \partial x$ reduces with increase in $\lambda$. Also at $y=y_{0}$, an increase in the shear stress is noted with the increased anisotropy. But at the SD interface, the average shear stress reduces with increase in $\lambda$ beyond $1$. Like $\partial p/ \partial x$, the shear stress changes its nature from monotonic to non-monotonic with increased tissue anisotropy. From the overall discussions, it is seen that both $\partial p/ \partial x$ and shear stress reduces with increase in $\lambda$ beyond the isotropic limit. Hence, a patient may experience less pain in the form of superficial pain from the SD interface when the SCL possesses significant larger anisotropy.\\

\begin{figure}[h!]
\centering
\subfigure[]{\label{pressure_optimum}\includegraphics[width=0.48\textwidth]{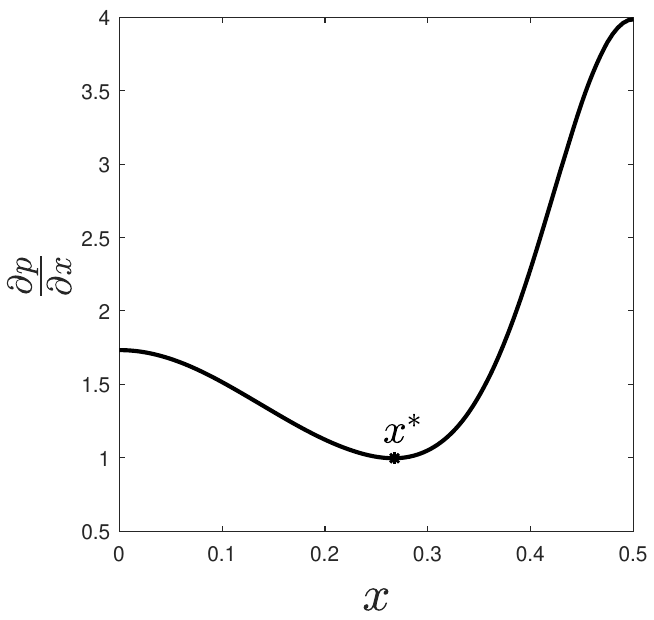}}
\subfigure[]{\label{lambda_vs_a}\includegraphics[width=0.48\textwidth]{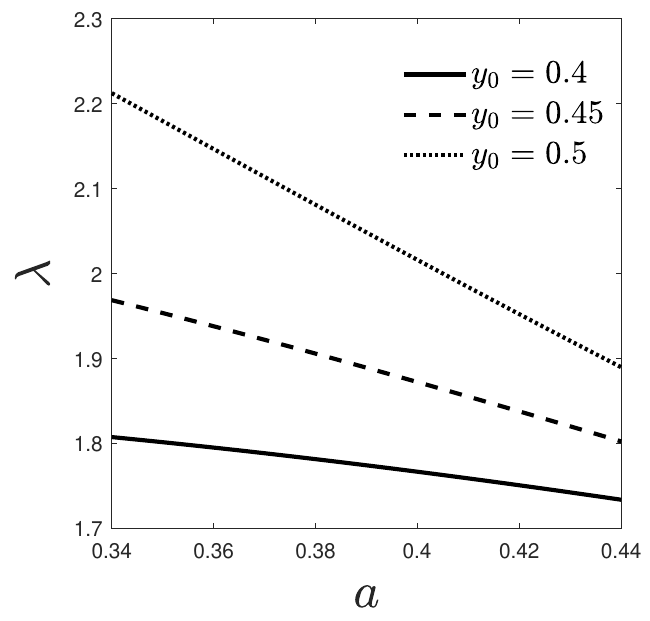}}
\caption{(a) Pressure gradient variation with respect to $x\in[0,0.5]$; $x^*$ is the optimum value point, (b) Anisotropic ratio ($\lambda$) versus $a$ for $y_0=0.4,0.45,0.5$.}
\end{figure}
\noindent
Both the pressure gradient and shear stress significantly vary with the viscosity of injected drug. Figs. \ref{pressure_various_mur} and \ref{shear_various_mur} illustrate pressure gradient and shear stress for $\mu_r=0.01$ and $\mu_r=0.1$ corresponding to three locations $y=0$, $y=y_0$ and $y=R(x)$ while other parameters are $a=0.4$, $\lambda=2$, $\lambda_s=0.05$ and $y_0=0.45$. We observe the rise of $\partial p/ \partial x$ with the increasing magnitude of $\mu_r$ in the above three locations of $y$. Since $\lambda=2$ (consideration anisotropic permeability of SCL), a non-monotonic nature of both $\partial p/ \partial x$ and shear stress is observed. This behavior becomes more aggressive with decreasing $\mu_r$. Therefore, upon injecting a low viscous fluid (much lower as compared to the AC), we can achieve such high non-monotonic behavior in $\partial p/ \partial x$. Moreover, corresponding to $\mu_r=0.01$, once again the opposite behavior of $\partial p/ \partial x$ is noticed between the positions $y=y_{0}$ and $y=R(x)$. This will guarantee the formation of secondary eddy from the primary one.\\

\begin{figure}[h!]
\centering
\subfigure[]{\label{normal_near_lambda}\includegraphics[width=0.48\textwidth]{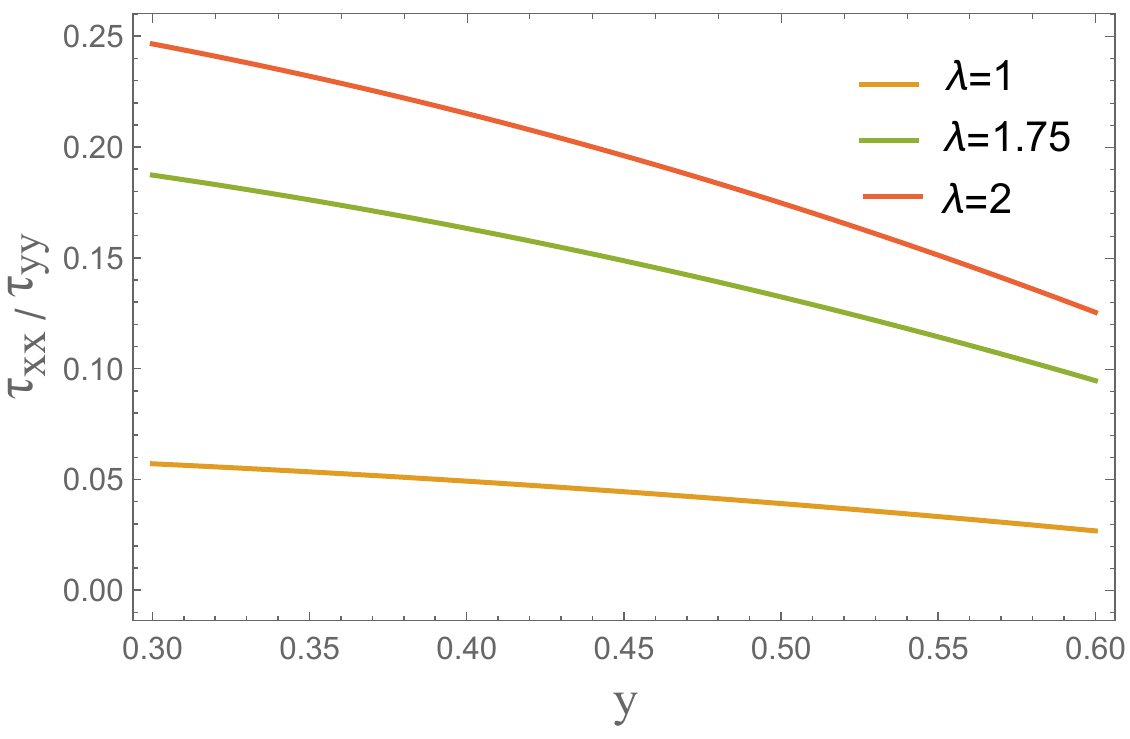}}
\subfigure[]{\label{shear_near_lambda}\includegraphics[width=0.46\textwidth]{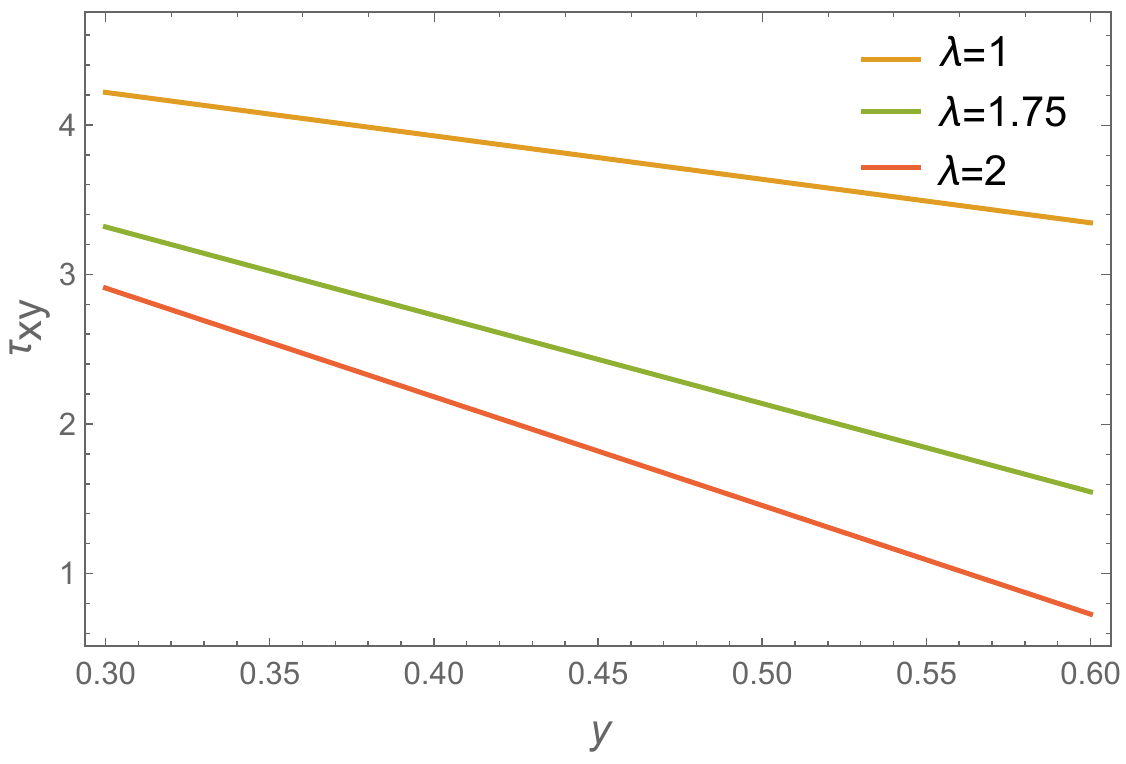}}
\subfigure[]{\label{shear_near_image}\includegraphics[width=0.49\textwidth]{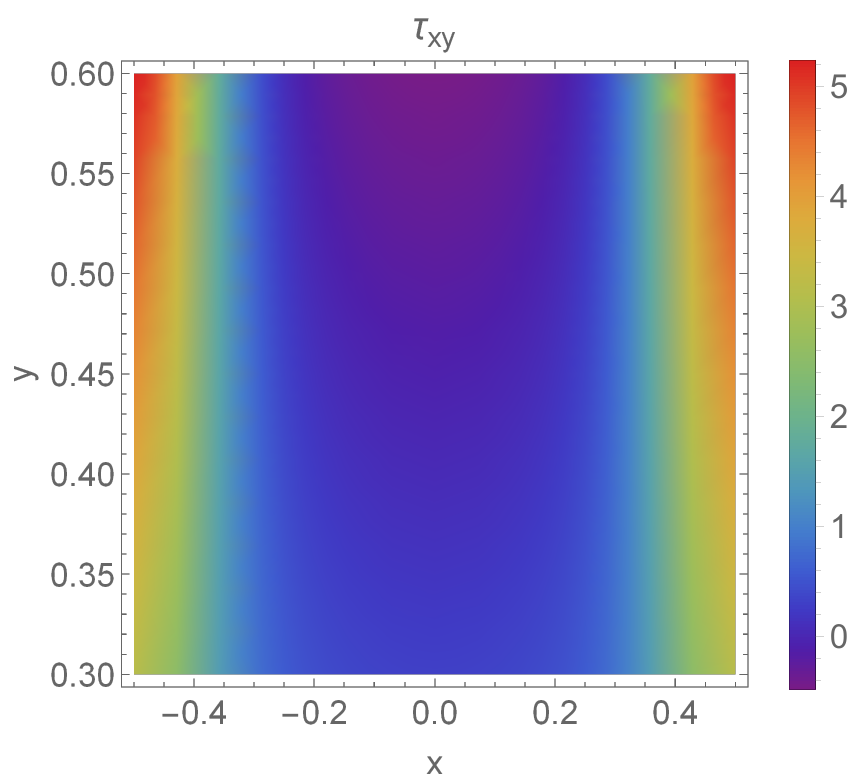}}
\caption{(a) Normal stress components $\tau_{xx}$ \& $\tau_{yy}$, (b) shear stress $\tau_{xy}$ for $\lambda=1,1.75,2$ and (c) shear stress field within the domain $\{(x,y): -0.5\leq x \leq 0.5, 0.3\leq y \leq 0.6 \}$ corresponding to $\lambda=2$.}
\end{figure}
\noindent
Further, we focus on the shear stress distribution for $\mu_r=0.01,0.1$ at the three locations $y=0$, $y=y_0$ and $y=R(x)$ when the other parameters are fixed as above. As analogous to the impact of pressure gradient, low $\mu_{r}$ shows sensitivity towards the shear stress field. The response of the shear stress field is not much significant near the SM interface (i.e, near $y=0$). However, near the line of injection (i.e, $y=y_{0}$) the increased non-monotonicity in the shear stress profile is noted due to reduced $\mu_{r}$. It is expected that this sensitivity (increase in non-monotonicity) becomes even higher in case of further lower $\mu_{r}$. On the other hand, an increased magnitude of $\partial p/ \partial x$ and shear stress can be noted at the SD interface with increase in $\mu_{r}$ from $0.01$ to $0.1$. This incident predicts the enhancement of superficial pain when a high viscous fluid is injected within SCL. Injection of low viscous fluid leads to both primary and secondary eddies as a consequence of the rapid change in the shear stress field between $y=0$ and $y=y_{0}$ and between $y=y_{0}$ and $y=R(x)$ respectively. Consequently, from the observations of pressure gradient and shear stress field we may conclude that besides the higher tissue anisotropy ($\lambda\geq1.75$), a larger difference in viscosity between the injected fluid and AC (i.e., $\mu_r<0.05$) causes adequate circulation of the injected fluid within the interstitial space of SCL.
\subsection{Effect of slip coefficient $\lambda_s$ on shear stress}
\noindent
Since the slip coefficient $\lambda_s$ is proportional to $\sqrt{Da}$, it characterizes the material property of the SM interface while responding to the generated shear stress during the fluid injection process. The effect of $\lambda_s$ on shear stress is distributed perpendicular to the other portions of SCL as well. With ageing SCL becomes thin and develops rigidity ($Da$ may assume smaller value). As a result, SM interface also attains higher rigidity. This suggests $\lambda_s$ to assume smaller value as compared to lesser rigid SM interface. On the other hand, a low/middle aged or healthy person develops relatively low shear stress at the SM interface. Consequently, $\lambda_s$ assumes its larger magnitude as compared to the case of a thin or aged person. Based on the choice of $Da$ in this present study, we need $\lambda_{s}>0$ and can be chosen $\lambda_{s}=0.01,0.05$. Fig. (\ref{shear_various_slip}) elucidates that the SM interface is more sensitive to the shear stress developed as compared to the other portion of SCL. As $\lambda_s$ controls the shear stress at the SM interface, high shear stress is noted at the SM interface at a low magnitude of $\lambda_s$ (e.g., $\lambda_{s}=0.01$). The developed shear stress gets dissipated along the interface when $\lambda_s$ is high (e.g., $\lambda_{s}=0.05$). In the other words, the SCL acts as a shear stress absorber when $\lambda_s$ is large and saves the underlying ML from possible damage due to the stress. Hence, at higher $\lambda_s$, a patient may realize less pain as a result of shear stress dissipation. Consequently, high $\lambda_s$ induces a favorable situation by reducing the shear stress during fluid injection. With the ageing, the subcutaneous injection losses some of its benefits and develop chance of lasting pain to the patient. However, to a healthy or a young aged people, the developed pain may be lasted for less time.
\subsection{Hydrodynamic behaviour near the line of injection}
\noindent
We anticipate that the viscous force becomes negligible near the line of injection within the SC layer due to the weakening impact of the hydrodynamic boundary layer. One can compare this situation with the significant domination of viscous forces close to the wall for a Poiseuille type flow within a channel or tube. Moreover, such domination is found to be relevant in the case of Poiseuille type flow in fluid overlying a porous medium \citep{hill2008poiseuille} and 2D flow through a wavy anisotropic porous channel \citep{karmakar2017note} where flows near the boundary obeys the Brinkman equation but are far from the boundary viscous forces become less effective and hence the flow satisfies Darcy equation. Consequently in this study, the second-order derivative terms in the momentum equation for the IF motion and the terms containing $\mu_f/\mu_c$ can be dropped. Subsequently, we obtain the following equations for the leading order:
\begin{equation}\label{HEq2}
(u_{f_0} - u_{c_0}) ~\thicksim - \frac{\partial p_0}{\partial x},
\end{equation}
and
\begin{equation}\label{HEq3}
(v_{f_0} - v_{c_0}) ~\thicksim   \frac{{\partial^2 p}_{0}}{{\partial x}^2} y.
\end{equation}
From the volumetric flow rate balance, one can obtain the pressure gradient for leading order as
\begin{equation}\label{HEq4}
p_{0x}~\thicksim \left(\frac{R(x)}{2}-\frac{1}{R(x)}\right).
\end{equation}
Similarly, we can obtain the \text{O}($\delta^2$) equations as
\begin{equation}\label{HEq5}
(u_{f_1}-u_{c_1})~\thicksim  \lambda^2 \frac{p_{0xxx}}{2}y^2 + d_1(x),
\end{equation}
and consequently, using the volumetric flow rate across the cross section, $d_1(x)$ is determined as
\begin{equation}\label{HEq6}
d_1(x) ~\thicksim - \lambda^2 \frac{p_{0xxx}}{6}R^{2}(x) - \frac{R(x)}{2}.
\end{equation}
Finally, we have the pressure gradient for \text{O}($\delta^2$) as
\begin{equation}\label{HEq7}
p_{1_x} ~\thicksim -\lambda^2 \frac{p_{0xxx}}{2}\left(y^2 - \frac{R(x)^2}{3}\right) + \frac{R(x)}{2}.
\end{equation}
Thus, the pressure gradient up to \text{O}($\delta^2$) is
\begin{equation}\label{HEq8}
\frac{\partial p}{\partial x} ~\thicksim \frac{\partial p_0}{\partial x} - \delta^2 \lambda^2 \frac{p_{0xxx}}{2}\left(y^2 - \frac{R(x)^2}{3}\right) + \frac{\delta^2}{2}R(x).
\end{equation}
The utility of the above analysis aims to deduce the normal and tangential stresses in a convenient way (without involving tedious calculations). Hence to check the validity, we can compare the pressure gradient as in Eq. (\ref{HEq8}) with that of obtained from the main calculations near the line of injection (or far away from the interfaces). Therefore through Fig. \ref{compare}, we observe that both the pressure gradients show similar behavior both qualitative and quantitative points of view (black and red lines). We can also establish the validity of the present model by plotting the pressure gradient obtained from the study of \citet{karmakar2017note} corresponding to the fixed anisotropy ratio $\lambda=2$. The blue line represents the pressure gradient variation versus $x$ showing a nice qualitative agreement with the present study. Note that the anisotropic geometry is the key feature of both the studies.\\

\noindent
From the Eq. (\ref{HEq8}), we have the pressure gradient near the line injection. In particular, at the line of injection
\begin{equation}\label{oth3}
\frac{\partial p}{\partial x} = p_{0x} - \delta^2 \lambda^2 \frac{p_{0xxx}}{2}\left(y_0^2 - \frac{R(x)^2}{3}\right) + \frac{\delta^2}{2}{R(x)}.
\end{equation}
Pressure gradient from Eq. (\ref{oth3}) may attain its optimum value at some points which can be calculated. If $x^*$ denote such an optimum point (see Fig. (\ref{pressure_optimum})), then we have
\begin{equation}\label{oth4}
\lambda^2 \delta^2 \left(\frac{p_{0xxxx}(x^*)}{2}\left(y_0^2 - \frac{R(x^*)^2}{3}\right) - \frac{p_{0xxx}(x^*)}{3} R\left(x^{*}\right) R'(x^{*}) \right) = p_{0xx}(x^*) + \frac{\delta^2}{2}{R'(x^*)}
\end{equation}
Eq. (\ref{oth4}) gives an explicit relation between the anisotropic ratio $\lambda$, amplitude $a$ and penetration depth $y_0$ (height of the line of injection from SM interface). As tissue anisotropy is a material property, it can not be controlled from outside by an administrator during injection, though the pinching depth $a$ and penetration depth $(1-y_0)$ can be regulated during the injection time. So it would be very effective if we find a relation between $\lambda$, $a$ and $y_0$. In this context, Eq. (\ref{oth4}) gives an idea about $\lambda$ in terms of $a$ and $y_0$. Consequently, Fig. \ref{lambda_vs_a} shows variation of anisotropic ratio $\lambda$ with respect to amplitude $a$ corresponding to the different penetration depth $y_0=0.4, 0.45$ and $0.5$. It is evident that with the increase of $a$, $\lambda$ decreases along the injection line $y_0$. Even when skin pinching is high, and tissue anisotropy is relatively low, eddies can still form and assist in adequately mixing the injected fluid with IF. Moreover, tissue anisotropy increases as the injection line moves deeper into the tissue. This implies that higher injection depths may necessitate higher levels of tissue anisotropy compared to other cases.\\

\noindent
Fig. \ref{normal_near_lambda} represents normal stresses $\tau_{xx}$ and $\tau_{yy}$ for various $\lambda$ within the range $0.3\leq y \leq 0.6$ enclosing the line of injection. We observe that the normal stresses increase with $\lambda$. Therefore both along the $x$ and $y$ directions, the intensity of the pain at the injection site increases with tissue anisotropy. On the other hand, Fig. \ref{shear_near_lambda} shows opposite behaviour corresponding to that of shear stress variation concerning $\lambda$ within the above range of $y$. The pain generated due to the shear stress decreases with increasing tissue anisotropy. The inverse behavior of normal and shear stress fields maintains the mechanical equilibrium within the SC layer which becomes imbalanced due to the tissue anisotropy variation. The behavior of the shear stress field can be justified from the longitudinal pressure gradient ($\partial p/\partial x$) variations concerning $y$ for various $\lambda$ near the line of injection. The rate of shear stress variation along the direction normal to the line of injection is proportional to $\partial p/\partial x$. Moreover from Fig. \ref{shear_near_image}, one can notice the behaviour of shear stress in the domain $\{(x,y): -0.5\leq x \leq 0.5, 0.3\leq y \leq 0.6 \}$ containing the line of injection. It shows that the magnitude of shear stress is symmetric with respect to $x$-axis. The minimum value of shear stress attained at $x=0$. In other words, shear stress attains its minimum vale at the position where the skin is being lifted up. The injected drug spreads from the injection point to the extracellular region. Consequently, shear stress is developed away from the line $x=0$.

\section{Concluding Remarks}
\noindent
A two-dimensional fluid injection model within the subcutaneous layer (SCL) has been investigated using biphasic mixture theory. We study the flow pattern of the injected fluid (containing drug), pressure gradient, and shear stress in terms of parameters $a$ (skin pinching height), $\lambda$ (anisotropy ratio), $\mu_r$ (viscosity ratio), and $\lambda_S$ (slip coefficient). The overall hydrodynamic analysis reveals the creation of the primary eddy structure at the line of injection due to the development of a high-pressure gradient. In addition, it is seen that the lifted portion of the SCL may witness secondary eddies depending on the viscosity of the injected fluid and the height of the skin pinching. Depending on the anisotropy ratio, one may see this secondary eddy if the SD interface is not regular. Both the primary and secondary eddies play a significant role to homogenize the injected fluid with the IF when (i) the anisotropy ratio of SCL is greater than 1.5 (ii) low viscosity ratio ($\mu_{r}<0.05$) and (iii) high skin pinching height ($a>0.3$).  \\

\noindent
Since pressure gradient and shear stress act as possible indicators of pain generation inside SCL \citep{mueller2005pressure,goossens2009fundamentals}, the high viscous injected fluid may induce pain to a patient receiving SCI. On the other hand, our analysis reveals that low skin pinching height and small anisotropy ratio of SCL can be held responsible for realizing of more pain. Moreover, the shear stress at the SM interface is high corresponding to a small slip coefficient ($\lambda_{s}=0.001$). Thus, the SCL can affect the underlying ML by imparting high shear stress generated from the fluid injection. With the ageing issue, enhanced shear stress is developed at the SM interface which is manifested by a smaller slip coefficient ($\lambda_{s}\leq0.001$).\\
\section*{Acknowledgements}
\noindent
First author A. S. Pramanik acknowledges University Grants Commission (UGC), Govt. of India for providing junior research fellowship (NET-JRF, Award Letter Number: 1149/(CSIR-UGC NET DEC-2018)). Second author B. Dey acknowledges University research assistance (Ref. 1516/R-2020 dated 01.06.2020) of University of North Bengal for supporting this work. Authors are very much thankful to Prof. G. P. Raja Sekhar, Professor of Mathematics at Indian Institute of Technology Kharagpur, West Bengal, India, for his inspiration and necessary comments to improve this manuscript.
\section*{Appendix A. Solution to the leading-order problem}
\label{appendix1}
\noindent
With the boundary conditions, the solution of the equations (\ref{Eq1})-(\ref{Eq6}) are
$$u_{f_0}(x,y)=\frac{X_0(x,y) + Y_0(x,y)}{1+\mu_r}\hspace{0.2cm}, \hspace{0.2cm} u_{c_0}(x,y)=\frac{X_0(x,y) - \mu_rY_0(x,y)}{1+\mu_r} ,$$
where $X_0(x,y)$ and $Y_0(x,y)$ are given by\\
\[X_{0}(x,y) = \left\{
  \begin{array}{lr}
L_1 {p_{0}}_{x}y^2 + A_1^{(0)}(x) y + A_{2}^{(0)}(x) ,~~~~~~~~~~~~~~~~~~~~~~~~~ \text{if  } 0<y<y_{0}\\\\
L_1 {p_{0}}_{x}\left(R(x)-y\right)^2 + A_{3}^{(0)}(x) (R(x)-y) + A_{4}^{(0)}(x) ,  \text{if  } y_0<y<R(x)
\end{array}
\right.
\]\\
\[Y_{0}(x,y) = \left\{
  \begin{array}{lr}
B_1^{(0)}(x) \text{cosh}\left(\beta y\right) + B_2^{(0)}(x) \text{sinh}\left(\beta y\right) + L_2 p_{0x} ,~~~~~~~ \textrm{if} 0<y<y_{0}\\\\
B_3^{(0)}(x) \text{cosh}\left(\beta (R(x)-y)\right) + B_4^{(0)}(x) \text{sinh}\left(\beta (R(x)-y)\right) + L_2 p_{0x},\\
~~~~~~~~~~~~~~~~~~~~~~~~~~~~~~~~~~~~~~~~~~~~~~~~~~~~~~~~~~~~~~~~~ \text{if  } y_0<y<R(x)
\end{array}
\right.
\]
 in which $A_i^{(0)}(x)$, $B_i^{(0)}(x)$ ($i=1,2,3,4$) are constants of integration which can be calculated using the boundary conditions with the condition of continuity at $y=y_0$. $L_1$, $L_2$ and $\beta$ are given by the followings
$$L_1=\frac{\mu_r \alpha^2}{2} \text{      ,       } L_2 = -\frac{(\phi_f - \phi_c \mu_r)}{1+\mu_r} \text{      and       } \beta^2 = (1+\mu_r)\alpha^2.$$
Also using the equation of continuity, we have\\
$$v_{f_0}(x,y)=\frac{V_0(x,y) + W_0(x,y)}{1+\mu_r}\hspace{0.2cm}, \hspace{0.2cm} v_{c_0}(x,y)=\frac{V_0(x,y) - \mu_rW_0(x,y)}{1+\mu_r} ,$$
where $V_0(x,y)$ and $W_0(x,y)$ are given by\\
\[V_0(x,y) = \left\{
  \begin{array}{lr}
 - \frac{1}{3}L_1 p_{0xx}y^3 - \frac{1}{2}\left(A^{(0)}_{1}(x)\right)_{x} y^2 - \left(A^{(0)}_{2}(x)\right)_{x} y + c_1^{(0)}(x),\\
 ~~~~~~~~~~~~~~~~~~~~~~~~~~~~~~~~~~~~~~~~~~~~~~~~~~~~~~~~~~~~~~~~~~~~\text{if  } 0<y<y_{0}\\\\
 - \frac{1}{3}L_1 p_{0xx}\left(R(x)-y\right)^3 - \frac{1}{2}\left(A^{(0)}_{3}(x)\right)_{x} (R(x)-y)^2 \\
 - \left(A^{(0)}_{4}(x)\right)_{x} (R(x)-y) + c^{(0)}_2(x), ~~~~~~~~~~~~~~~~~~~~~~~\text{if  } y_0<y<R(x)
\end{array}
\right.
\]\\
\[W_0(x,y) = \left\{
  \begin{array}{lr}
  - \frac{1}{\beta}\left(B^{(0)}_1(x)\right)_{x} \text{sinh}\left(\beta y\right) - \frac{1}{\beta}\left(B^{(0)}_2(x)\right)_{x} \text{cosh}\left(\beta y\right) - L_2 p_{0xx} y + d_1^{(0)}(x),\\
  ~~~~~~~~~~~~~~~~~~~~~~~~~~~~~~~~~~~~~~~~~~~~~~~~~~~~~~~~~~~~~~~~~~~~~\text{if  } 0<y<y_{0}\\\\
  - \frac{1}{\beta}\left(B^{(0)}_3(x)\right)_{x} \text{sinh}\left(\beta (R(x)-y)\right) - \frac{1}{\beta}\left(B^{(0)}_4(x)\right)_{x} \text{cosh}\left(\beta (R(x)-y)\right)\\ - L_2 p_{0xx} (R(x)-y)  + d_2^{(0)}(x), ~~~~~~~~~~~~~~~~~~~~~~~~~~~~~~~\text{if  } y_0<y<R(x)
 \end{array}
\right.
\]
in which $c_i^{(0)}(x)$, $d_i^{(0)}(x)$ ($i=1,2$) are constants of integration which can be calculated using the boundary conditions.
\section*{Appendix B. Solution to the \text{O}($\delta^2$) problem}
\label{appendix2}
The general solution of the \text{O}($\delta^2$) problem is
$$u_{f_1}(x,y)=\frac{X_1(x,y) + Y_1(x,y)}{1+\mu_r}\hspace{0.2cm}, \hspace{0.2cm} u_{c_1}(x,y)=\frac{X_1(x,y) - \mu_rY_1(x,y)}{1+\mu_r} ,$$
where $X_1(x,y)$ and $Y_1(x,y)$ are given by
\[X_1(x,y) = \left\{
  \begin{array}{lr}
  E_1(x)y^4 + E_2(x)y^3 + E_3(x) y^2 + A^{(1)}_1(x) y^2 + A^{(1)}_2(x)y + A^{(1)}_3(x),\\
  ~~~~~~~~~~~~~~~~~~~~~~~~~~~~~~~~~~~~~~~~~~~~~~~~~~~~~~~~~~~~~~~~~~~~\textrm{if} 0<y<y_{0} \\\\
  E_1(x)(R(x)-y)^4 + E_4(x)(R(x)-y)^3 + E_5(x) (R(x)-y)^2 \\ + A^{(1)}_1(x) (R(x)-y)^2  + A^{(1)}_{4}(x)(R(x)-y) + A^{(1)}_5(x), ~ \text{if~~~~} y_0<y<R(x)
 \end{array}
\right.
\]\\
\[Y_1(x,y) = \left\{
  \begin{array}{lr}
  B^{(1)}_1(x) + B^{(1)}_2(x)\textrm{cosh}(\beta y) + B^{(1)}_3(x)\textrm{sinh}(\beta y) + F_1(x) y \textrm{sinh}(\beta y) \\ +
  F_2(x) y \textrm{cosh}(\beta y) + F_3(x) (\beta^2 y^2 + 2) + F_4(x)y + F_{5}(x),~~\text{if  } 0<y<y_{0}\\\\
  B^{(1)}_1(x) + B^{(1)}_4(x)\textrm{cosh}(\beta (R(x)-y)) + B^{(1)}_5(x)\textrm{sinh}(\beta (R(x)-y)) \\
  + F_{6}(x) (R(x)-y) \textrm{sinh}(\beta (R(x)-y)) + F_{7}(x) (R(x)-y) \textrm{cosh}(\beta (R(x)-y)) \\
  + F_3(x) (\beta^2 (R(x)-y)^2 + 2) + F_{8}(x)(R(x)-y) + F_{5}(x), ~\text{if  } y_0<y<R(x)
 \end{array}
\right.
\]
in which $A_i^{(1)}(x)$, $B_i^{(1)}(x)$ ($i=1,2,3,4,5$) are constants of integration which can be calculated using the boundary conditions and condition of continuity at $y=y_0$. $E_{i}(x)$($i=1,2,3,4,5$) and $F_{i}(x)$ ($i=1,2,3,4,5,6,7,8$) are explicitly given in the Appendix C. \\

\noindent
Hence, the general solution of the considered problem is determined upto \text{O}($\delta^2$) as
$$u_{f}(x,y)=u_{f_0}(x,y) + \delta^2 u_{f_1}(x,y) + \text{O}(\delta^4), $$
$$u_{c}(x,y)=u_{c_0}(x,y) + \delta^2 u_{c_1}(x,y) + \text{O}(\delta^4). $$
\section*{Appendix C.}
\label{appendix3}
\noindent
$E_1(x)=-\frac{1}{6}L_{1} p_{0xxx}$ , \hspace{0.5cm}  $E_2(x)=-\frac{1}{3} (A_1^{(0)}(x))_{xx}$, \hspace{0.5cm} $E_3(x)=- (A_2^{(0)}(x))_{xx}$ , \\\\
$E_4(x)=-\frac{1}{3} (A_3^{(0)}(x))_{xx}$, \hspace{0.5cm} $E_5(x)=- (A_4^{(0)}(x))_{xx}$ , \hspace{0.5cm} $F_1(x)=\frac{(\lambda^2-2)}{2\beta}(B_1^{(0)}(x))_{xx}$, \\\\
$F_2(x)=\frac{(\lambda^2-2)}{2\beta}(B_2^{(0)}(x))_{xx}$ , \hspace{0.5cm} $F_3(x)=\frac{\lambda^2}{2\beta^2} L_{2}p_{0xxx}$ , \hspace{0.5cm} $F_4(x)=\lambda^2 (d_1^{(0)}(x))_{x}$ ,\\\\
$F_{5}(x)= -\frac{2}{\beta^2}L_{2} p_{0xxx}$ , \hspace{0.5cm} $F_{6}(x)=\frac{(\lambda^2-2)}{2\beta}(B_3^{(0)}(x))_{xx}$ , \hspace{0.5cm} $F_{7}(x)=\frac{(\lambda^2-2)}{2\beta}(B_4^{(0)}(x))_{xx}$, \\\\
$F_{8}(x)=\lambda^2 \left(d_2^{(0)}(x)\right)_{x}$ .

\end{document}